\documentclass[reqno,a4paper,11pt,centertags]{amsart}
\usepackage{amsmath,amsthm,amscd,amssymb,latexsym,upref,stmaryrd,cite}
\usepackage{geometry,subfigure}
\usepackage{graphicx}
\geometry{left=3.5cm,right=3.5cm,top=4.5cm,bottom=4.5cm}
\usepackage{hyperref}
\pagestyle{plain}
\newtheorem{theorem}{Theorem}[section]
\newtheorem{lemma}[theorem]{Lemma}

\newtheorem{corollary}[theorem]{Corollary}

\newtheorem{ip}{Inverse Problem}[section]

\newtheorem{remark}{Remark}[section]

\title{inverse spectral problem for the Schr\"odinger operator on the square lattice}
\author{Dongjie Wu\textsuperscript{1}}
\author{Chuan-Fu Yang\textsuperscript{2}}
\author{Natalia Pavlovna Bondarenko\textsuperscript{3}}

\begin{document}
	\maketitle
	{
		\footnotetext[1]{Department of Applied Mathematics, School of Mathematics and Statistics, Nanjing University of Science and Technology, Nanjing, 210094, Jiangsu, China, Email: wudongjie@njust.edu.cn}
		\footnotetext[2]{Department of Applied Mathematics, School of Mathematics and Statistics, Nanjing University of Science and Technology, Nanjing, 210094, Jiangsu, China, Email: chuanfuyang@njust.edu.cn}
		\footnotetext[3]{S.M. Nikolskii Mathematical Institute, Peoples' Friendship University of Russia (RUDN University), 6 Miklukho-Maklaya Street, Moscow, 117198, Russian Federation, Email: bondarenkonp@sgu.ru}
	}
	{\noindent\small{\bf Abstract:}
		We consider an inverse spectral problem on a quantum graph associated with the square lattice. Assuming that the potentials on the edges are compactly supported and symmetric, we show that the Dirichlet-to-Neumann map for a boundary value problem on a finite part of the graph uniquely determines the potentials. 
		We obtain a reconstruction procedure, which is based on the reduction of the differential Schr\"odinger operator to a discrete one. As a corollary of the main results, it is proved that 
		the S-matrix for all energies
		in any given open set in the continuous spectrum uniquely specifies the potentials on the square lattice.
	}
	\vspace{1ex}
	
	{\noindent\small{\bf Keywords:}
		inverse spectral problem, Schr\"{o}dinger operator, square lattice, Dirichlet-to-Neumann map, inverse scattering }
	
	\section{introduction}\label{sec:introduction}
	Recently, there have been a lot of studies on \textit{quantum graphs}, which are one-dimensional Schr\"odinger (Sturm-Liouville) operators $-\frac{d^2}{dz^2} + q_e(z)$ acting on the edges of a metric graph, while some matching conditions are imposed at the graph vertices. Such operators are used for modeling various processes on graph-like structures in physics, mechanics, chemistry, and other applications. Expositions of spectral theory results for quantum graphs can be found, e.g., in the monographs \cite{Cve,BCFK06, Ber,Chu} and references therein.
	
	This paper is mostly focused on \textit{inverse} spectral and scattering problems. Such problems consist in the reconstruction of unknown operator characteristics from spectral information. Till now, inverse problems have been studied for several types of quantum graphs. Reconstruction of differential operators on \textit{compact} graphs has been investigated in \cite{AK23, Bel, Bon, Gut, Kur, Kur10, Yur1, Yur3, Yur2} and other studies. Inverse spectral-scattering problems on \textit{non-compact} graphs with finite and infinite edges were solved, e.g., in \cite{But, Ign, MT12}. 
	
	In recent years, Schr\"odinger operators on infinite \textit{periodic} graphs have attracted considerable attention of scholars in connection with applications in material studies and nanotechnology (see \cite{Kuc, Kor1} and references therein). Spectral properties of such operators were studied in \cite{Kuc, Kor1, Kor2, Luo1, Luo2} and other papers. 
	We also mention that in \cite{Exn} some spectral theory issues were considered for infinite quantum graphs of general structure (not necessarily periodic).	
	However, to the best of the authors' knowledge, there were no studies on inverse problems for differential Schr\"odinger operators on periodic graphs except for the two preprints \cite{And4, And5} of Ando et al. Let us discuss their results in more detail.
	
	The two preprints \cite{And4, And5} present the same results on the inverse scattering for the Schr\"odinger operator on the hexagonal lattice with finitely supported potential. The authors of \cite{And4, And5} have shown that, in this case, the scattering matrix (S-matrix) uniquely determines the Dirichlet-to-Neumann (D-N) map for a boundary value problem on a finite part of the graph. Furthermore, the differential (continuous) operator was reduced to a difference (discrete) one and the potentials on the graph edges were reconstructed from the D-N map. Thus, it has been shown that the S-matrix uniquely specifies a finitely supported potential on the hexagonal lattice. 	
	The preprints \cite{And4, And5} continue the previous studies of their authors \cite{And1, And2, And3, Iso1, Iso2}, in which analogous ideas and methods were developed for the inverse scattering on \textit{discrete} periodic graphs. The difference between \cite{And4} and \cite{And5} is that, in \cite{And5}, the preliminary steps up to determining the D-N map by the S-matrix are implemented for the hexagonal lattice, while in \cite{And4} they are provided for a general periodic lattice. This opens a perspective of studying inverse spectral problems for different types of periodic graphs.
	
	This paper is concerned with a family of one-dimensional Schr\"{o}dinger
	operators $ -\frac{d^{2}}{dz^{2}}+q_{e}(z)$ defined on the edges of the square lattice as in Figure~\ref{1}, assuming
	the Kirchhoff conditions at the vertices (the details are given in Section~\ref{sec:edge}). Here, $z$ varies over the interval $(0, 1)$ and $e\in E$, where $E$ is the set of all edges of the square lattice. 
	
	Let us impose the following assumptions on the potentials.
	
	\smallskip
	
	\textbf{(Q-1)}  $q_{e}(z)$ is real-valued, and $q_{e} \in L^{2}(0,1)$.
	
	\textbf{(Q-2)}  $q_{e}(z) = 0$ on $(0, 1)$ except for a finite number of edges.
	
	\textbf{(Q-3)}  $q_{e}(z) = q_{e}(1-z)$ for $z \in (0, 1)$.
	
	\begin{figure}[h]
		\centering
		\includegraphics[width=7.5cm,height=4cm]{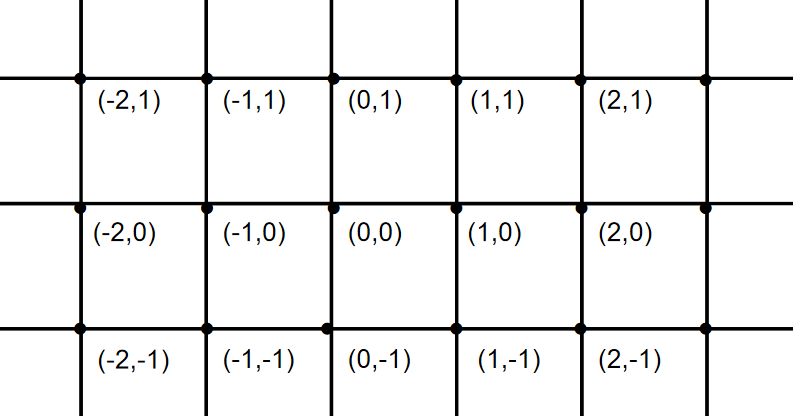}
		\caption{}
		\label{1}
	\end{figure}

	Since the support of the potential $q$ is finite, we can choose a sufficiently large square domain $D_N := \{ n_1 + \mathrm{i} n_2 \colon 0 \le n_1, n_2 \le N\}$ such that $\mbox{supp}\, q$ is located inside of $D_N$ and consider the edge Dirichlet-to-Neumann map $\Lambda_E$ associated with this domain (see Section~\ref{sec:edge} for details). This paper is devoted to the following inverse spectral problem.	

	\begin{ip} \label{ip:DNe}
		Given the edge D-N map $\Lambda_E$, find the potential $q$. 
	\end{ip}

	The main results of this paper are the uniqueness theorem (Theorem~\ref{thm:edgeDN}) for Inverse Problem~\ref{ip:DNe} and the reconstruction procedure in Section~\ref{sec:construct}. For solving Inverse Problem~\ref{ip:DNe}, we use the relation between the edge D-N map and the vertex D-N map, which is defined in Section~\ref{sec:vertex}. In other words, we reduce the continuous problem to the discrete one. Consequently, we develop a constructive algorithm for the recovery of the potentials $q_e$ from the vertex D-N map. Note that, although we follow the general idea of Ando et al, our reconstruction procedure is different from \cite{And4, And5}. First, our algorithm is based on specific properties of the square lattice. Second, the authors of \cite{And4, And5} apply the strategy of recovering the potentials along any zigzag line, which perfectly works for discrete operators with constant coefficients \cite{And3, Iso2} but causes difficulties for the reduced vertex Laplacian whose coefficients depend on the unknown potentials (see Remark~\ref{rem:compare} for details). In this paper, we step-by-step recover $q_e$ together with the Laplacian coefficients required at the next steps.
	For reconstruction of the potential on each fixed edge, we use the classical results of the inverse spectral theory for the Schr\"odinger operators on finite intervals (see, e.g., \cite{Fre}).
	Finally, our results imply the uniqueness for solution of the inverse scattering problem by the S-matrix on the square lattice.
	
	It is worth mentioning that Inverse Problem~\ref{ip:DNe} can be treated as the inverse spectral problem on a finite metric graph by the Weyl matrix associated to the boundary vertices. However, our results are novel in this direction. It is well-known the the Weyl matrix uniquely specifies the Schr\"odinger operator on a tree-graph (see \cite{Bel, Yur1}). But, for graphs with cycles, this is not the case in general. Even for the simplest graphs with loops additional data are required (see \cite{Kur10, Yur2}). Our paper provides a new class of finite quantum graphs whose potentials are uniquely determined by the Weyl matrix. Here, the symmetry of the potentials (Q-3) is crucial.		
	
	The paper is organized as follows. In Section \ref{sec:edge}, we give the definitions related to the edge Schr\"odinger operator and the edge D-N map. In Section \ref{sec:vertex}, we define the reduced vertex Laplacian and the vertex D-N map, and study the relation between the edge D-N map and the vertex D-N map. In Section \ref{sec:solution}, some auxiliary solutions of the vertex Schr\"{o}dinger equation are obtained. In Section \ref{sec:construct}, we reconstruct the potentials from the D-N map. In Section \ref{sec:scattering}, the inverse scattering by the S-matrix is discussed.

	\section{Edge Laplacian and Dirichlet-to-Neumann map} \label{sec:edge}
	
	Let us define an infinite square lattice with the vertex set
	$$
		V := \{ n_1 + \mathrm{i} n_2 \colon n_1, n_2 \in \mathbb Z \} 
	$$
	and the edge set 
	$$
		E := \{ (v, v + 1), (v, v + \mathrm{i}) \colon v \in V\}.
	$$
	In other words, we consider the vertices as points on the complex plane $\mathbb C$ and suppose that two vertices are joined by an edge if the distance between them equals $1$. For two vertices $w, v \in V$, the notation $w \sim v$ means that there exists an edge $e \in E$ such that $v, w$ are end points of $e$.
	
	Let each edge $e$ be endowed with arclength metric and identified with the interval $(0, 1): e = \{(1 - z)e(0) + ze(1) \colon 0 \leqslant z \leqslant 1\}$, where $e(0), e(1) \in V$.
	Put 
	$$
	E_{v} := E_{v}(0) \cup E_{v}(1), \quad  E_{v}(i) := \{e \in E\colon e(i) = v\},\  i = 0, 1.
	$$
	
	Let $\mathcal E \subseteq E$.
	Then, we call $f = \{ f_e \}_{e \in \mathcal E}$ \textit{a function} on $\mathcal E$ if each $f_e$ ($e \in \mathcal E$) is a function on $[0,1]$. We will write that $f = \{ f_e \}_{e \in \mathcal E} \in \mathcal A(E)$ if $f_e \in \mathcal A[0,1]$ for all $e \in \mathcal E$, where $\mathcal A$ is any functional class, e.g., $\mathcal A = C, L^2$, etc. For $v \in V$, $e \in E_v \cap \mathcal E$, and $f = \{ f_e\}_{e \in \mathcal E}$, introduce the notations
	$$
		f(v,e) = \begin{cases}
		f_e(0), & e(0) = v, \\
		f_e(1), & e(1) = v,
		\end{cases}, \qquad
		\partial_e f(v) = \begin{cases}
		-f_e'(0), & e(0) = v, \\
		f_e'(1), & e(1) = v.
		\end{cases}
	$$
	In other words, $f(v,e)$ is the value of the function $f$ and $\partial_e f(v)$, of its derivative in the vertex $v$ along the edge $e$.
	
	Let $v \in V$ be a vertex such that $E_v \subseteq \mathcal E$. Then, we say that $f = \{ f_e \}_{e \in \mathcal E}$ satisfies \textit{the Kirchhoff conditions} at $v$ if
	
	\smallskip
	
	\textbf{(K-1)} $f$ is continuous at $v$, that is, $f(v,e_1) = f(v,e_2)$ for any $e_1, e_2 \in E_v$. In this case, we denote $f(v) = f(v,e)$, $e \in E_v$.

	\textbf{(K-2)} $f\in C^1(E_v)$ and $\sum_{e\in E_{v}} \partial_e f(v)=0$.
	
	\smallskip
			
	We will also use the notation $f(v) = f(v,e)$ if $e$ is the only edge in $\mathcal E$ incident to the vertex $v$, that is, $\mathcal E \cap E_v = \{ e \}$.
	
	Let us define the Schr\"odinger operator on a finite subgraph of the square lattice. Suppose that the potential $q = \{ q_e\}_{e \in E}$ satisfies the assumptions (Q-1), (Q-2), and (Q-3).
	Let $\mathring{\Omega}$ be the vertex set of some connected finite subgraph of the lattice $(V, E)$. Denote 
\begin{gather} \label{defOmega}
\partial \Omega :=\{ v\notin \mathring{\Omega}\colon \exists w\in \mathring{\Omega}\colon w\sim v \}, \quad \Omega := \mathring{\Omega} \cup \partial \Omega, \\ \nonumber E_{\Omega} := \{ e = (v,w) \in E \colon v, w \in \Omega \: \text{and} \: (v \in \mathring{\Omega} \: \text{or} \: w \in \mathring{\Omega}) \}, \\ \nonumber
\mathring{E}_{\Omega} := \{ e = (v,w) \in E_{\Omega} \colon v, w \in \mathring{\Omega}\}, \quad \partial E_{\Omega} := E_{\Omega} \setminus \mathring{E}_{\Omega}.
\end{gather}

Thus, $E_{\Omega}$ is the edge set of the subgraph with the vertices $\Omega$.
The notations $\mathring{\Omega}$, $\partial \Omega$, $\mathring{E}_{\Omega}$, and $\partial E_{\Omega}$ are used for the interior vertices, the boundary vertices, the interior edges, and the boundary edges, respectively, of the subgraph $(\Omega, E_{\Omega})$.

Denote by $\Delta_E$ and call \textit{the edge Laplacian} the second derivative operation along each edge:
$$
\Delta_E u_e(z) = \frac{d^2}{dz^2} u_e(z), \quad e \in E.
$$

For the subgraph $(\Omega, E_{\Omega})$, consider the Dirichlet boundary value problem for the edge Laplacian
\begin{equation}
\begin{cases}
(-\Delta_E+q_{e}-\lambda)u_e=0,\ in\  \mathring{E}_{\Omega},\\
u=f,\ on \ \partial \Omega, 
\end{cases}\label{BVP1}
\end{equation}
where a function $u = \{ u_e\}_{e \in E_{\Omega}} \in H^2(E_{\Omega})$ is assumed to fulfill the Kirchhoff conditions at every interior vertex $v \in \mathring{\Omega}$ and $f = \{ f(v)\}_{v \in \partial \Omega}$ is a function given on the set of the boundary vertices.

Denote by $(-\Delta_{E,\Omega} + q_{E,\Omega})$ the operator acting by the rule $(-\Delta_E + q_e) u_e$, whose domain $D(-\Delta_{E,\Omega} + q_{E,\Omega})$ is the set of all the functions
$u = \{ u_e \}_{e \in E_{\Omega}} \in H^2(E_{\Omega})$ satisfying the Dirichlet condition $u(v) = 0$ at any boundary vertex $v \in \partial \Omega$ and the Kirchhoff conditions at any interior vertex $v \in \mathring{\Omega}$. By the standard argument, the operator $(-\Delta_{E,\Omega} + q_{E,\Omega})$ is self-adjoint and its spectrum $\sigma(-\Delta_{E,\Omega} + q_{E,\Omega})$ is a countable set of real eigenvalues. Below, we assume that
\begin{equation} \label{cond1}
\lambda \not\in \sigma(-\Delta_{E,\Omega} + q_{E,\Omega}).
\end{equation}

\textit{The edge Dirichlet-to-Neumann (D-N) map} $\Lambda_{E,\Omega}(\lambda)$ is defined as follows:
\begin{equation}
	\Lambda_{E,\Omega}(\lambda)f(v)=\partial_e u_{e}(v), \quad v\in \partial \Omega, \label{E_DN}
\end{equation}
where $u$ is the solution of the boundary value problem \eqref{BVP1} for the boundary data $f$ and $e$ is the edge of $E_{\Omega}$ having $v$ as its end point.
Below we assume that the vertex set $\Omega$ is fixed and use the short notation $\Lambda_E := \Lambda_{E,\Omega}$. 
Note that $f = \{ f(v)\}_{v \in \partial \Omega}$ can be treated as a vector of $\mathbb C^M$, $M := |\partial \Omega|$, and $\Lambda_E(\lambda)$, as the $(M \times M)$ matrix function such that $\Lambda_E(\lambda) f = g$, $g = \left\{ \frac{d}{dz}u_{e}(v) \right\}_{v \in \partial \Omega} \in \mathbb C^M$. The matrix function $\Lambda_E(\lambda)$ is analytic in $\lambda$ satisfying \eqref{cond1}.

For simplicity, we consider the domain 
$$
\mathring{\Omega} = D_N := \{ n_1 + \mathrm{i} n_2 \colon 0 \le n_1, n_2 \le N\},
$$
where a natural number $N$ is chosen so large that $\mbox{supp} \, q \subseteq \mathring{E}_{\Omega}$. In other words, $q_e = 0$ on all the edges except $e \in \mathring{E}_{\Omega}$. In Section~\ref{sec:scattering}, we show that the D-N map is uniquely specified by the S-matrix and vice versa, so the form of the domain is actually unimportant.

Consider Inverse Problem~\ref{ip:DNe} for the domain $\Omega$.
In fact, we have to find the potential on $\mathring{E}_{\Omega}$, since on all the other edges $q_e = 0$.
Our main result is the following uniqueness theorem.

\begin{theorem} \label{thm:edgeDN}
Suppose that the potential $q = \{ q_e\}_{e \in E}$ on the square lattice $(V, E)$ satisfies (Q-1), (Q-2), and (Q-3) and $\Lambda_E(\lambda)$ is the edge Dirichlet-to-Neumann map for the region $\Omega$ such that $\mathring\Omega = D_N$ and $\mbox{supp} \, q \subseteq \mathring{E}_{\Omega}$. Then
$\Lambda_E(\lambda)$ uniquely specifies the potential $q$. 
\end{theorem}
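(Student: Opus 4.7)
The plan is to reduce Inverse Problem~\ref{ip:DNe} to a purely discrete question on the vertex set $\Omega$ and then, from the discrete object, to peel the edge potentials off one layer at a time using classical Sturm--Liouville inverse theory. On every edge $e \in E_\Omega$ I would introduce the standard fundamental solutions $c_e(z,\lambda), s_e(z,\lambda)$ of $-u_e''+q_e u_e = \lambda u_e$ with $c_e(0,\lambda) = s_e'(0,\lambda) = 1$ and $c_e'(0,\lambda) = s_e(0,\lambda) = 0$. Assumption (Q-3) then forces $c_e(1,\lambda) = s_e'(1,\lambda)$, so the monodromy across $e$ is encoded by the single pair $(c_e(1,\lambda),\, s_e(1,\lambda))$. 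Expressing each outward derivative $\partial_e u(v)$ in terms of the endpoint values $u(v)$ and $u(v_e')$ via this monodromy and substituting into the Kirchhoff condition at every vertex of $\mathring\Omega$ converts the boundary value problem \eqref{BVP1} into a discrete, $\lambda$-dependent Schr\"odinger equation on $\Omega$ whose coefficients are rational in the monodromy entries of the incident edges. This is the reduced vertex Laplacian whose analysis is carried out in Section~\ref{sec:vertex}.

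Because $q_e \equiv 0$ on the outer edges $e \in \partial E_\Omega$, the corresponding monodromy entries are explicit trigonometric functions of $\sqrt\lambda$; combined with \eqref{E_DN}, this lets one compute, for arbitrary boundary data $f$, the vertex values of the solution $u$ at the first interior layer, so $\Lambda_E(\lambda)$ determines the vertex D-N map $\Lambda_V(\lambda)$. The reconstruction itself then proceeds by induction on the distance from $\partial\Omega$. Assume the potentials on every edge up to a certain depth have already been recovered; the reduced discrete equation is then explicit throughout that region, and propagating the boundary data one step deeper produces linear relations in which exactly one new set of monodromy functions appears. Coupling these relations with the information encoded in $\Lambda_V(\lambda)$ isolates the Dirichlet characteristic function $s_e(1,\cdot)$ of each new edge $e$ as a meromorphic function of $\lambda$. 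By the Borg-type uniqueness theorem for symmetric Sturm--Liouville problems on $[0,1]$ (see, e.g., \cite{Fre}), the entire function $s_e(1,\cdot)$ determines $q_e$ uniquely, completing the inductive step.

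The main obstacle is that the reduced discrete Schr\"odinger equation has $\lambda$-dependent coefficients built from the unknown potentials, so the constant-coefficient ``zigzag-line'' strategy used for discrete periodic lattices \cite{And3, Iso2} and for the hexagonal quantum graph \cite{And4, And5} will not transplant to the present setting. The induction must therefore exploit the combinatorial geometry of the square lattice together with the symmetry identity $c_e(1,\lambda) = s_e'(1,\lambda)$ coming from (Q-3) so that, at every step, only one scalar unknown monodromy function enters the equation being solved. Formalising such an ordering and verifying that none of the rational expressions involved has an identically vanishing denominator is precisely the task carried out in Section~\ref{sec:construct}; Theorem~\ref{thm:edgeDN} will follow as soon as that algorithm is shown to be deterministic and to consume only data computable from $\Lambda_E(\lambda)$.
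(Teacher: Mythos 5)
Your overall strategy coincides with the paper's: use the monodromy identity $c_e(1,\lambda)=s_e'(1,\lambda)$ forced by (Q-3) to reduce \eqref{BVP1} to a $\lambda$-dependent discrete Schr\"odinger equation, pass from $\Lambda_E$ to the vertex D-N map, and then recover the potentials edge by edge via classical one-dimensional inverse spectral theory. However, as written the proposal has two concrete gaps.

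First, the heart of the argument is missing. You assert that propagating the boundary data one layer deeper ``produces linear relations in which exactly one new set of monodromy functions appears,'' but you give no mechanism for arranging this, and you explicitly defer it to Section~\ref{sec:construct}. The paper achieves it by solving a partial Cauchy problem (Lemma~\ref{SS1}: Dirichlet data prescribed on $\partial\Omega\setminus(\partial\Omega)_R$ together with Neumann data on $(\partial\Omega)_L$, which $\Lambda_V$ permits one to complete to full Dirichlet data), and thereby constructing, for each diagonal $A_k=\{x_1+x_2=k\}$, a special solution vanishing on the half-plane $x_1+x_2<k$ (Lemma~\ref{SS2}). Evaluating the vertex equation at a vertex just below $A_k$ then collapses to the two-term relation \eqref{Eq3}, and it is this vanishing --- not a generic one-step propagation --- that isolates a single unknown ratio of monodromy functions. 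Note also that the sweep is organised along diagonals $k=2N,\dots,N+1$ followed by a rotation by $\pi$, not by graph distance to $\partial\Omega$; without the vanishing solutions, the equation at a generic new vertex involves two unknown edges simultaneously and the system does not obviously close.

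Second, your inductive step claims to isolate the characteristic function $s_e(1,\cdot)$ of \emph{every} new edge and then invoke the Borg-type theorem for symmetric potentials. In the actual procedure this is only possible for the two extreme edges of each diagonal, where \eqref{Eq3} pairs the unknown edge with a boundary edge whose characteristic function is explicitly $\sin\sqrt\lambda/\sqrt\lambda$; that is where Lemma~\ref{Th143} applies. For the interior edges of a diagonal, what can be extracted from the vertex equation (after computing $q_{v,\lambda}$ at the vertices on the line) is the Weyl function $\psi'_{wv}(1,\lambda)/\psi_{wv}(1,\lambda)$ rather than $\psi_{wv}(1,\lambda)$ itself, and a different uniqueness theorem (Lemma~\ref{Th147}) is required. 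Your proposal does not distinguish these two cases, and the single tool you cite would not suffice for the interior edges.
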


\section{Reduced vertex Laplacian}\label{sec:vertex}

The proof of Theorem~\ref{thm:edgeDN} is based on the reduction of the edge Laplacian to the vertex one.

Consider the Schr\"odinger equation
\begin{equation}
\left(-\frac{d^{2}}{dz^{2}}+q_{e}(z)-\lambda\right)\phi=0, \quad z \in (0,1), \label{D-D}
\end{equation}
on each fixed edge $e \in E$. Let $\phi_{e0}(z, \lambda), \phi_{e1}(z, \lambda)$ be the solutions of \eqref{D-D}
with the initial data $\phi_{e0}(0,\lambda)=0, \, \phi_{e0}'(0,\lambda)=1$ and $\phi_{e1}(1,\lambda)=0, \, \phi_{e1}'(1,\lambda)=-1$, respectively. 

Denote by $H_e$ the Schr\"odinger operator $\left(-\frac{d^{2}}{dz^{2}}+q_{e} \right)$ with the domain $D(H_e)$ which consists of functions $u \in H^2[0,1]$ satisfying the Dirichlet conditions $u(0) = u(1) = 0$. It is well-known that the operator $H_e$ is self-adjoint and its spectrum is a countable set of real eigenvalues.
In the following, we assume that $\lambda \not\in \sigma(H_e)$, $e \in E$.
This guarantees that $\phi_{e0}(1, \lambda) \neq 0$ and $\phi_{e1}(0, \lambda) \neq 0$. 

If $w, v \in V$ are two end points of an edge $e \in E$, then we denote
$$
\psi_{wv}(z, \lambda) =\begin{cases}
	\phi_{e0}(z, \lambda), &  e(0) = v,\\
	\phi_{e1}(1-z, \lambda), &  e(1) = v.
\end{cases} 
$$
Note that, by the assumption (Q-3), we have $\phi_{e0}(z, \lambda) = \phi_{e1}(1 - z, \lambda)$, hence $\psi_{wv}(1, \lambda) = \psi_{vw}(1, \lambda)$. In particular, if $q_e = 0$, then $\psi_{wv}(z, \lambda) = \frac{\sin \sqrt{\lambda}z}{\sqrt{\lambda}}$.
	
We define the reduced vertex Laplacian $\Delta_{V,\lambda}$ on $V$ by
\begin{equation}
	(\Delta_{V,\lambda}u)(v)=\frac{1}{4}\sum_{w\sim v}\frac{1}{\psi_{wv}(1,\lambda)}u(w), \quad v\in V,   \label{RVL}
\end{equation}
for $ u \in L^{2}_{loc}(V)$. We also define the scalar multiplication operator:
\begin{equation}
	(Q_{V,\lambda}u)(v)=q_{v,\lambda} u(v), \quad q_{v,\lambda} := \frac{1}{4}\sum_{w \sim v}\frac{\psi_{wv}'(1,\lambda)}{\psi_{wv}(1,\lambda)}. \label{Q_V}
\end{equation}

\begin{lemma} \label{lem:EV}
Let $v$ be a fixed vertex. If a function $u = \{ u_e \}_{e \in E_v} \in H^2(E_v)$ satisfies equation \eqref{D-D} for all $e \in E_v$ and the Kirchhoff conditions in $v$, then the relation $(-\Delta_{V,\lambda} + Q_{V,\lambda})u = 0$ holds at $v$.	
\end{lemma}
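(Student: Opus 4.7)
\medskip
\noindent\textbf{Proof plan.} The plan is to represent $u_e$ on each edge $e \in E_v$ in a fundamental system of solutions anchored at $v$, substitute into the Kirchhoff condition at $v$, and then use the symmetry assumption (Q-3) to recognize the coefficients that appear as exactly the ones defining $\Delta_{V,\lambda}$ and $Q_{V,\lambda}$.

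First, I would fix $e \in E_v$ with other endpoint $w$ and expand $u_e$ in the basis $\{\phi_{e0},\phi_{e1}\}$ of solutions of \eqref{D-D}. Writing $u_e = \alpha \phi_{e0} + \beta \phi_{e1}$ and evaluating at the two endpoints of $e$, the initial conditions for $\phi_{e0},\phi_{e1}$ give $\alpha = u_e(1)/\phi_{e0}(1,\lambda)$ and $\beta = u_e(0)/\phi_{e1}(0,\lambda)$. From these, $u_e'(0)$ and $u_e'(1)$ can be expressed as linear combinations of $u_e(0)=u(v,e)$ and $u_e(1)=u(w,e)$, with coefficients built from $\phi_{e0}(1,\lambda)$, $\phi_{e1}(0,\lambda)$, $\phi_{e0}'(1,\lambda)$, $\phi_{e1}'(0,\lambda)$.

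Next, I invoke (Q-3) in the form $\phi_{e0}(z,\lambda)=\phi_{e1}(1-z,\lambda)$, which immediately gives the two identifications
\[
\phi_{e0}(1,\lambda)=\phi_{e1}(0,\lambda)=\psi_{wv}(1,\lambda), \qquad \phi_{e0}'(1,\lambda)=-\phi_{e1}'(0,\lambda)=\psi_{wv}'(1,\lambda),
\]
where the last equality uses the definition of $\psi_{wv}$ in both orientation cases $e(0)=v$ and $e(1)=v$. Inserting these into the expression for the inward derivative $\partial_e u(v)$ (which is $-u_e'(0)$ or $u_e'(1)$ according to the orientation) produces, in a uniform way independent of orientation,
\[
\partial_e u(v) \;=\; \frac{u(v)\,\psi_{wv}'(1,\lambda)\;-\;u(w)}{\psi_{wv}(1,\lambda)}.
\]

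Finally, I sum this identity over $e\in E_v$ (equivalently, over $w\sim v$) and apply the Kirchhoff condition (K-2), $\sum_{e\in E_v}\partial_e u(v)=0$, which yields
\[
u(v)\sum_{w\sim v}\frac{\psi_{wv}'(1,\lambda)}{\psi_{wv}(1,\lambda)} \;-\; \sum_{w\sim v}\frac{u(w)}{\psi_{wv}(1,\lambda)} \;=\; 0.
\]
Comparing with the definitions \eqref{RVL} and \eqref{Q_V} of $\Delta_{V,\lambda}$ and $Q_{V,\lambda}$, the left-hand side equals $4\bigl((-\Delta_{V,\lambda}+Q_{V,\lambda})u\bigr)(v)$, which gives the claim. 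The one place where attention is needed, and which I would highlight as the main obstacle in the absence of (Q-3), is the identification of the coefficient of $u(v)$ with $\psi_{wv}'(1,\lambda)/\psi_{wv}(1,\lambda)$: this is what forces the potential symmetry into the reduction and is precisely the reason why the vertex Laplacian has the clean form \eqref{RVL}--\eqref{Q_V}.
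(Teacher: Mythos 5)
Your proposal is correct and follows essentially the same route as the paper: expand $u_e$ on each edge of $E_v$ in a fundamental system of solutions of \eqref{D-D}, substitute into the Kirchhoff conditions (K-1), (K-2) at $v$, and use (Q-3) to identify the resulting coefficients with $\psi_{wv}(1,\lambda)$ and $\psi_{wv}'(1,\lambda)$, so that the sum becomes $4\bigl((-\Delta_{V,\lambda}+Q_{V,\lambda})u\bigr)(v)=0$. The only difference is cosmetic: the paper uses the basis $S_e, C_e$ normalized at $z=0$ after assuming for convenience that $e(0)=v$ for every $e\in E_v$, whereas your basis $\{\phi_{e0},\phi_{e1}\}$ treats both edge orientations uniformly; both versions rely on the standing assumption $\lambda\notin\sigma(H_e)$ so that the endpoint values determine the expansion coefficients.
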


\begin{proof}
	Consider the solutions $S_{e}(z,\lambda)$ and $C_{e}(z,\lambda)$ of equation~of \eqref{D-D} satisfying the initial conditions $S_{e}(0,\lambda)=C'_{e}(0,\lambda)=0$ and $S'_{e}(0,\lambda)=C_e(0,\lambda) =1$. Then, any solution $u_{e}$ of \eqref{D-D}  can be written as 
\begin{equation}
	u_{e}(z
	)=a_{e}S_{e}(z,\lambda)+b_{e}C_{e}(z,\lambda)\label{u_SC}
\end{equation}
	where $a_{e}$ and $b_{e}$ are constants. 
	
	Let $v \in V$ be fixed. For convenience, suppose that $e(0) = v$ for all $e \in E_v$. Then \eqref{u_SC} together with (K-1) imply (see Figure~\ref{2}):
	\begin{figure}[h]
		\centering
		\includegraphics[width=5.5cm,height=4cm]{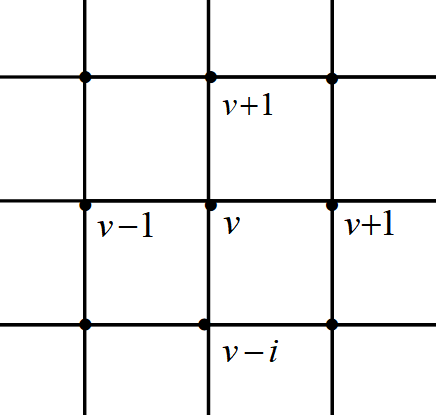}
		\caption{}
		\label{2}
	\end{figure}
	$$
	b_{(v,v+1)}=b_{(v,v-1)}=b_{(v,v+i)}=b_{(v,v-i)}=u(v),
	$$
	and (K-2) implies
	$$
	a_{(v,v+1)}+a_{(v,v-1)}+a_{(v,v+i)}+a_{(v,v-i)}=0.
	$$
	By \eqref{u_SC}, we have 
	$$
	a_{e}=\frac{u_{e}(1)}{S_{e}(1,\lambda)}-\frac{C_{e}(1,\lambda)}{S_{e}(1,\lambda)}u(v).
	$$
	Thus,
\begin{equation}
	\sum_{k=1,-1,i,-i}a_{(v,v+k)}=\sum_{k=1,-1,i,-i}\frac{u_{(v,v+k)}(1)}{S_{(v,v+k)}(1,\lambda)}-\frac{C_{(v,v+k)}(1,\lambda)}{S_{v,v+k}(1,\lambda)}u_{e}(0)=0.\label{suma_k}
\end{equation}
	Note that $S_{e}(1,\lambda)=\psi_{wv}(1,\lambda)$ and, under the assumption (Q-3), $C_{e}(1,\lambda)=S'_{e}(1,\lambda)$. Consequently, we arrive at the relation $(-\Delta_{V,\lambda} + Q_{V,\lambda})u(v) = 0$.
\end{proof}

Consider the vertex set $\Omega$ defined by \eqref{defOmega} for $\mathring{\Omega} = D_N$ and the interior boundary value problem
\begin{equation}
\begin{cases}
(-\Delta_{V,\lambda}+Q_{V,\lambda})u=0\ in\  \mathring{\Omega},\\
u=f\ on \ \partial \Omega.
\end{cases}\label{BVP2}
\end{equation}

By virtue of Lemma~\ref{lem:EV}, if a function $u = \{ u_e \}_{e \in E_{\Omega}}$ solves the edge boundary value problem \eqref{BVP1}, then its values $\{ u(v) \}_{v \in \Omega}$ in the vertices satisfy \eqref{BVP2}.
    
Define the vertex degree in the subgraph $(\Omega, E_{\Omega})$ as follows:
$$
\mbox{deg}_{\Omega}(v)=\begin{cases}
   	\#\{w\in \Omega \colon w\sim v\}, & v \in \mathring{\Omega},\\
   	\#\{w\in \mathring{\Omega} \colon w\sim v\}, & v \in \partial \Omega.
\end{cases}
$$
    
Then, \textit{the vertex D-N map} is defined by
\begin{equation*}
   	\Lambda_{V, \Omega}(\lambda)f(v)=-\frac{1}{\mbox{deg}_{\Omega}(v)}\sum_{w\sim v,w\in \mathring{\Omega}} u(w), \quad v \in \partial \Omega. 
\end{equation*}

Below we use a shorter notation $\Lambda_V := \Lambda_{V, \Omega}$.
Note that, for the region $\Omega = D_N$, the degrees $\mbox{deg}_{\Omega}(v)$ of the boundary vertices $v \in \partial \Omega$ equal to $1$. Hence
\begin{equation}
\Lambda_V(\lambda)f(v)=-u(w), \quad v \in \partial \Omega, \quad w \in \mathring{\Omega}, \quad w \sim v. \label{V_DN}
\end{equation}

The edge and the vertex D-N maps are closely related to each other, which is shown in the following lemma.

\begin{lemma}\label{E-V_DN}
The following equality holds:
\begin{equation}
\Lambda_V(\lambda)=-\cos \sqrt{\lambda} I +\frac{\sin \sqrt{\lambda}}{\sqrt{\lambda}}\Lambda_E(\lambda). \label{relation-DN}
\end{equation}	
where $I$ is the unit operator in $\mathbb C^M$ and $\lambda$ satisfies \eqref{cond1}.
\end{lemma}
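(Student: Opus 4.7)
The plan is to trace the relation between $\Lambda_E$ and $\Lambda_V$ through the boundary edges, which (by the choice of $N$) carry no potential and hence admit explicit solutions in terms of $\sin\sqrt{\lambda}\,z$ and $\cos\sqrt{\lambda}\,z$.

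Fix boundary data $f$ on $\partial\Omega$ and let $u = \{u_e\}_{e\in E_\Omega}$ be the solution of the edge problem \eqref{BVP1}. By Lemma~\ref{lem:EV}, the trace $\{u(v)\}_{v\in\Omega}$ satisfies the vertex problem \eqref{BVP2}, so with the notation of \eqref{V_DN} we have $\Lambda_V(\lambda)f(v) = -u(w)$ for the unique interior neighbour $w$ of each $v\in\partial\Omega$. Thus the task reduces to expressing $u(w)$ in terms of $f(v)$ and $\partial_e u_e(v) = \Lambda_E(\lambda)f(v)$ on the single boundary edge $e$ joining $v$ and $w$.

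First I would fix such a boundary edge $e = (w,v)$ with $w \in \mathring\Omega$, $v \in \partial\Omega$, and parameterize it so that $e(0)=w$, $e(1)=v$. Since $\operatorname{supp} q \subseteq \mathring{E}_\Omega$, we have $q_e \equiv 0$, so on this edge $u_e$ solves $-u_e'' = \lambda u_e$ and admits the representation
\begin{equation*}
u_e(z) = u(w)\cos\sqrt{\lambda}\,z + B\,\frac{\sin\sqrt{\lambda}\,z}{\sqrt{\lambda}},
\end{equation*}
where the first coefficient is fixed by $u_e(0)=u(w)$, and $B$ is determined by $u_e(1)=f(v)$, namely $B = \sqrt{\lambda}\,\bigl(f(v) - u(w)\cos\sqrt{\lambda}\bigr)/\sin\sqrt{\lambda}$. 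Note that $\sin\sqrt{\lambda}\neq 0$ because otherwise $\lambda$ would be a Dirichlet eigenvalue on this free edge, contradicting \eqref{cond1}.

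Next I would compute $\partial_e u_e(v) = u_e'(1)$ directly:
\begin{equation*}
u_e'(1) = -u(w)\sqrt{\lambda}\,\sin\sqrt{\lambda} + B\cos\sqrt{\lambda},
\end{equation*}
and after substituting the value of $B$ and using $\sin^2 + \cos^2 = 1$, a one-line simplification yields
\begin{equation*}
\frac{\sin\sqrt{\lambda}}{\sqrt{\lambda}}\,\partial_e u_e(v) = f(v)\cos\sqrt{\lambda} - u(w).
\end{equation*}
Combining this with $\partial_e u_e(v) = \Lambda_E(\lambda)f(v)$ and $-u(w) = \Lambda_V(\lambda)f(v)$ and rearranging gives precisely \eqref{relation-DN}. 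Performing this on every $v\in\partial\Omega$ yields the operator identity.

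The only delicate point is bookkeeping of orientations in the definition of $\partial_e f(v)$ and $\psi_{wv}$, but since the computation above is purely on a single free edge with a constant-coefficient equation, no real difficulty arises; the symmetry condition (Q-3) is not even needed here (it enters only earlier, in the derivation of the vertex Laplacian form). Thus the main ``obstacle'' is simply arranging the explicit sine/cosine solution with the correct endpoint conventions so that the final coefficients match the stated formula.
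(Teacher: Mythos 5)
Your proof is correct and follows essentially the same route as the paper: both reduce the identity to an explicit computation on the single potential-free boundary edge relating $u(w)$, $f(v)$, and the derivative at $v$. The only cosmetic differences are that you anchor the fundamental system at the interior endpoint with explicit $\cos\sqrt{\lambda}z$, $\sin\sqrt{\lambda}z/\sqrt{\lambda}$ and fix one orientation without loss of generality, whereas the paper works with the general $S_e, C_e$ basis plus the Wronskian identity and treats the two orientations $e(0)=v$ and $e(1)=v$ separately before specializing to $q_e=0$.
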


\begin{proof}
	Let $e(1)=v,\ v\in \partial\Omega$, then
	\begin{equation}
	\Lambda_V(\lambda)f(v)=-u_{e}(0),\quad \Lambda_E(\lambda)f(v)=u'_{e}(1).\label{u0u'1}
	\end{equation}
	By \eqref{u_SC}, we have
	\begin{equation}
	u_{e}(1)=a_{e}S_{e}(1,\lambda)+u_{e}(0)C_{e}(1,\lambda)=f(v),\label{ue1}
	\end{equation}
	and
	\begin{equation}
	u'_{e}(1)=a_{e}S'_{e}(1,\lambda)+u_{e}(0)C'_{e}(1,\lambda).\label{u'e1}
    \end{equation}
	Then
	\begin{equation}
	a_{e}=\frac{1}{S_{e}(1,\lambda)}(f(v)-u_{e}(0)C_{e}(1,\lambda)).\label{a_e}
    \end{equation}
	Substituting \eqref{a_e} into \eqref{u'e1} and using the relation 
	$$
	S_{e}(1,\lambda)C'_{e}(1,\lambda)-C_{e}(1,\lambda)S'_{e}(1,\lambda)=-1
	$$
	we obtain 
	\begin{equation}\label{d-n1}
	\begin{split}
	u'_{e}(1)
	&=f(v)\frac{S'_{e}(1,\lambda)}{S_{e}(1,\lambda)}+(C'_{e}(1,\lambda)-\frac{C_{e}(1,\lambda)S'_{e}(1,\lambda)}{S_{e}(1,\lambda)})u_{e}(0)\\
	&=f(v)\frac{S'_{e}(1,\lambda)}{S_{e}(1,\lambda)}-\frac{1}{S_{e}(1,\lambda)}u_{e}(0).
    \end{split}
    \end{equation}
	By \eqref{u0u'1} and \eqref{d-n1}, we get
	\begin{equation} \label{sm1}
	\Lambda_{V}(\lambda)f(v)=-f(v)S'_{e}(1,\lambda)+(\Lambda_{E}(\lambda)f(v))S_{e}(1,\lambda).
	\end{equation}
	Note that the potentials on the edge $e$ connected to the vertex $v\in \partial \Omega$ are zero, so we have
	$$
	S_e(1,\lambda)=\frac{\sin \sqrt{\lambda}}{\sqrt{\lambda}},\quad S'_e(1,\lambda)=\cos \sqrt{\lambda}.
	$$
	Using these relations together with \eqref{sm1}, we arrive at \eqref{relation-DN}.
	
	Now, let $e(0)=v,\ v\in \partial\Omega$, then
	\begin{equation}
	\Lambda_V(\lambda)f(v)=-u_{e}(1),\quad \Lambda_E(\lambda)f(v)=-u'_{e}(0).\label{u1u'0}
	\end{equation}
	By the boundary condition, we get 
	$u_{e}(0)=f(v)$. By \eqref{u_SC}, we have
	\begin{equation}
		u_{e}(1)=a_{e}S_{e}(1,\lambda)+f(v)C_{e}(1,\lambda),\label{ue11}
	\end{equation}
	and
	\begin{equation}
		u'_{e}(0)=a_{e}.\label{ae11}
	\end{equation}
    Substitute \eqref{ae11} into \eqref{ue11}, we obtain 
    \begin{equation} 
    u_{e}(1)=u'_{e}(0)S_{e}(1,\lambda)+f(v)C_{e}(1,\lambda).\label{d-n2}
    \end{equation}
    By \eqref{u1u'0} and \eqref{d-n2}, we have
    $$
    \Lambda_{V}(\lambda)f(v)=-f(v)C_{e}(1,\lambda)+(\Lambda_{E}(\lambda)f(v))S_{e}(1,\lambda).
    $$
    Note that 
    $$
    S_e(1,\lambda)=\frac{\sin \sqrt{\lambda}}{\sqrt{\lambda}},\quad C_e(1,\lambda)=\cos \sqrt{\lambda},
    $$
    so we directly derive \eqref{relation-DN}.
\end{proof}

Therefore, Inverse Problem~\ref{ip:DNe} of recovering the potential from the edge D-N map can be easily reduced to the following inverse problem.

\begin{ip} \label{ip:DNv}
	Given the vertex D-N map $\Lambda_V$, find the potential $q$.
\end{ip}

\section{Special solutions of the vertex Schr\"odinger equation}\label{sec:solution}
    We are now in a position to construct the solution of Inverse Problem~\ref{ip:DNv}. For this purpose, we first obtain some special solutions of the vertex Schr\"odinger equation 
 	$$
 		(-\Delta_{V, \lambda} + Q_{V, \lambda})u = 0.
 	$$  
    
    The boundary $\partial\Omega$ of the domain $\mathring{\Omega}=D_{N}$ consists of the four parts, see Figure~\ref{3}:
    \begin{figure}[h]
    	\centering
    	\includegraphics[width=6.5cm,height=5.5cm]{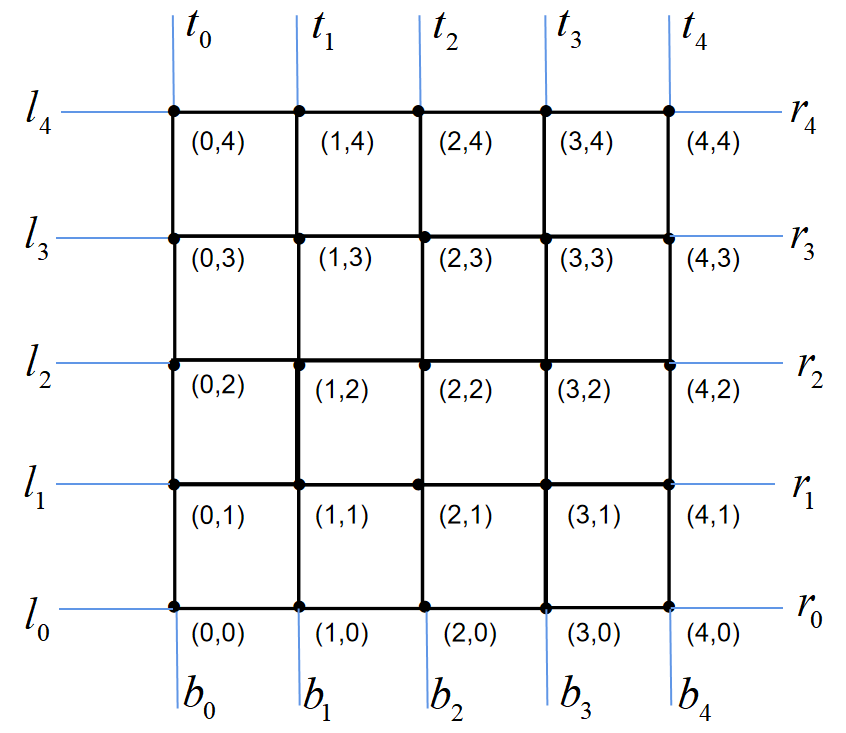}
    	\caption{N=4}
    	\label{3}
    \end{figure}
    \begin{align*}
    & (\partial\Omega)_{T}=\{ (N+1)\mathrm{i}+m \colon 0\leqslant m\leqslant N \}=\{t_{m}\}_{m=0}^N, \\
    & (\partial\Omega)_{B}=\{ -\mathrm{i}+m \colon 0\leqslant m\leqslant N \}=\{b_{m}\}_{m=0}^N, \\
    & (\partial\Omega)_{L}= \{ -1+m\mathrm{i} \colon 0\leqslant m\leqslant N \}=\{l_{m}\}_{m=0}^N, \\
    & (\partial\Omega)_{R}=\{ N+1+m\mathrm{i} \colon 0\leqslant m\leqslant N \}=\{r_{m}\}_{m=0}^N.
    \end{align*}
    
    Recall that $\Lambda_{V} = \Lambda_{V, \Omega}$ is the vertex D-N map defined by \eqref{V_DN}. The key to the inverse procedure is the following partial data problem.
    
    \begin{lemma}\label{SS1}
    	(1) Given a partial Dirichlet data $f$ on $\partial \Omega \backslash (\partial\Omega)_{R}$, and a partial
    	Neumann data $ g$ on $(\partial\Omega)_{L}$, there is a unique solution $u$ in $\Omega$ of the boundary value problem
    	\begin{equation}
    		\begin{cases}
    			(-\Delta_{V,\lambda}+Q_{V,\lambda})u=0\ in\  \mathring{\Omega},\\
    			u=f\ on \ \partial \Omega\backslash (\partial\Omega)_{R},\\
    			\partial_v^{\Omega}u=g\ on \ (\partial\Omega)_{L}.
    		\end{cases}\label{BVP3}
    	\end{equation}
    	where 
    	$$
    	\partial_v^{\Omega} u=-\frac{1}{\mbox{deg}_{\Omega}(v)}\sum_{w\sim v,w\in \mathring{\Omega}}u(w).
    	$$
    	
    	(2) Given the D-N map $\Lambda_{V}$, a partial Dirichlet data $f_{1}$ on $\partial \Omega \backslash (\partial\Omega)_{R}$ and a partial Neumann data $g$ on $(\partial\Omega)_{L}$, there exists a unique $f$ on $\partial \Omega$ such that $f=f_{1}$ on $\partial \Omega \backslash (\partial\Omega)_{R}$ and $\Lambda_{V}f=g$ on $(\partial\Omega)_{L}$.
    \end{lemma}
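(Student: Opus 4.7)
My plan is to prove both statements by a \emph{column-by-column marching} argument that propagates $u$ across $D_N$ from left to right. The crucial structural fact is that at each step the relevant vertex Schr\"odinger equation $(-\Delta_{V,\lambda}+Q_{V,\lambda})u(v)=0$ involves exactly one unknown value, and the coefficient multiplying that unknown is of the form $-\tfrac{1}{4\psi_{wv}(1,\lambda)}$, which is nonzero by our standing assumption $\lambda \notin \sigma(H_e)$.

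For part (1) I would start with the leftmost column. Since the only interior neighbor of $l_m = -1 + m\mathrm{i}$ is $m\mathrm{i}$ and $\deg_{\Omega}(l_m)=1$, the prescribed Neumann datum reduces to $g(l_m) = -u(m\mathrm{i})$, so $g$ directly determines $u$ on column $0$ of $\mathring{\Omega}$. Next I would use the vertex equation at each $v = m\mathrm{i}$: its four neighbors are $l_m$ (known from $f$), the two vertical neighbors $(m\pm 1)\mathrm{i}$ (either in column $0$, already known, or on $(\partial\Omega)_B\cup(\partial\Omega)_T$, known from $f$), and $1+m\mathrm{i}$; hence the equation determines $u(1+m\mathrm{i})$. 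Iterating this, once $u$ is known on columns $k-1$ and $k$, the equation at each vertex $k+m\mathrm{i}$ of column $k$ has as its single unknown the right neighbor $(k+1)+m\mathrm{i}$, with the vertical neighbors taken from column $k$ or from $(\partial\Omega)_B\cup(\partial\Omega)_T$. Marching up to $k=N$ fills in all of $\mathring{\Omega}$ together with the right-boundary column $(\partial\Omega)_R$. Because every value is forced by this procedure, both existence and uniqueness follow simultaneously.

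For part (2) I would apply part (1) to the data $(f_1, g)$ to obtain the unique $u$ on $\Omega$, and then simply set $f := f_1$ on $\partial\Omega\setminus(\partial\Omega)_R$ and $f(v) := u(v)$ for $v\in(\partial\Omega)_R$. By construction, $u$ solves the full Dirichlet problem \eqref{BVP2} with this $f$, and $\Lambda_V f(v) = \partial_v^{\Omega} u(v) = g(v)$ on $(\partial\Omega)_L$. Uniqueness of $f$ is then inherited from part (1): any two candidates produce solutions of \eqref{BVP2} that both solve the mixed problem \eqref{BVP3} with the same data $(f_1,g)$, so they coincide on $\Omega$, forcing the two Dirichlet data to agree on $(\partial\Omega)_R$. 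The main obstacle is the bookkeeping for the corner vertices, where I must verify that a neighbor failing to lie in an already-known interior column always lies on $(\partial\Omega)_B$ or $(\partial\Omega)_T$, and hence is given by $f$; this follows directly from the explicit description of the four boundary parts.
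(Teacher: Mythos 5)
Your part (1) is exactly the paper's argument (a left-to-right marching that determines each new column from the vertex equation, the Dirichlet data on the top/bottom/left boundary, and the Neumann data on $(\partial\Omega)_L$), only spelled out in more detail; the bookkeeping you worry about at the corners is fine, since every vertical neighbour not in an already-determined column lies on $(\partial\Omega)_T\cup(\partial\Omega)_B$ and the coefficient $-\tfrac{1}{4\psi_{wv}(1,\lambda)}$ of the single unknown is nonzero. Your part (2) is correct but takes a genuinely different route. The paper first applies part (1) to the homogeneous data $(0,0)$ to show that the submatrix $\Lambda_{V}((\partial\Omega)_{L};(\partial\Omega)_{R})$ has trivial kernel, hence is a bijection, and then writes down the explicit formula $f_{3}=\bigl(\Lambda_{V}((\partial\Omega)_{L};(\partial\Omega)_{R})\bigr)^{-1}\bigl(g-\Lambda_{V}((\partial\Omega)_{L};\partial\Omega\backslash(\partial\Omega)_{R})f_{1}\bigr)$ for the missing Dirichlet values. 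You instead read $f|_{(\partial\Omega)_R}$ off the mixed-problem solution $u$ and deduce uniqueness of $f$ from the uniqueness in part (1). Both arguments rely equally on the implicit well-posedness of the Dirichlet problem \eqref{BVP2} that underlies the definition of $\Lambda_V$. The one thing your version loses is what Step 3 of the reconstruction procedure actually needs: the paper's formula computes $f|_{(\partial\Omega)_R}$ from the data $(\Lambda_V,f_1,g)$ alone, whereas your construction of $f$ goes through the marching solution $u$, which requires knowing the coefficients $\psi_{wv}(1,\lambda)$, i.e.\ the unknown potentials. This is easily repaired --- your uniqueness statement with $f_1=0$, $g=0$ gives injectivity, hence invertibility, of the square submatrix $\Lambda_{V}((\partial\Omega)_{L};(\partial\Omega)_{R})$, and the paper's formula then follows --- but you should say so explicitly if the lemma is to be used constructively.
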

    
    \begin{proof}
    	(1) The values of $u(m\mathrm{i}),\  m=0,\cdots,N$ are computed from the values of $g$. Using equation (\ref{BVP3}), the Dirichlet data  $f$ and the known values of $u$, one can then compute $u(1+m\mathrm{i}),\ m=0,\cdots,N$. Next, we obtain the values of $u(2+m\mathrm{i}),\ m=0,\cdots,N$. Repeating this procedure, we get $u(v)$ for all $v \in \Omega$.
    	
    	(2) For subsets $A, B \subset \partial\Omega$, we denote the associated submatrix of $\Lambda_{V}$ by
    	$\Lambda_{V}(A;B)$. Suppose $f_{2}=0$ on $\partial\Omega\backslash (\partial\Omega)_{R}$ and $\Lambda_{V}f_{2}=0$
    	on $(\partial\Omega)_{L}$. By (1) and the boundary conditions, the solution
    	$u$ vanishes in $\Omega$ identically. Hence $f_{2} = 0$ on $(\partial\Omega)_{R}$. This implies the submatrix
    	$$
    	\Lambda_{V}((\partial\Omega)_{L};(\partial\Omega)_{R}):(\partial\Omega)_{R} \longrightarrow (\partial\Omega)_{L}
    	$$
    	is a bijection. We seek $f$ in the form
    	$$
    	(\Lambda_{V}f)\big|_{(\partial\Omega)_{L}}=\Lambda_{V}((\partial\Omega)_{L};(\partial\Omega)_{R})f_{3}+\Lambda_{V}((\partial\Omega)_{L};\partial\Omega\backslash(\partial\Omega)_{R})f_{1}=g,
    	$$
    	where
    	$$
    	f_{3}=\left( \Lambda_{V}((\partial\Omega)_{L};(\partial\Omega)_{R})\right) ^{-1}\left( g-\Lambda_{V}((\partial\Omega)_{L};\partial\Omega\backslash(\partial\Omega)_{R})f_{1} \right).
    	$$
    	This proves the lemma.
    \end{proof}
    
    Now, for $N+1 \leqslant k \leqslant 2N$, let us consider the diagonal line
    \begin{equation}
    	A_{k} = \{x_{1} + \mathrm{i}x_{2} \colon x_{1} +x_{2} = k \}.\label{A_k}
    \end{equation}
    The vertices on $A_{k} \cap \Omega$ are written as
    \begin{equation}
    	\alpha_{k,l} = \alpha_{k,0} + l(1-\mathrm{i}), \quad l=0,1,2,\dots,2N+2-k,\label{a_kl}
    \end{equation}
    where $\alpha_{k,0}=t_{k-(N+1)}=k-(N+1)+\mathrm{i}(N+1)$.
    \begin{lemma}\label{SS2}
    	Let $A_{k} \cap \partial\Omega = \{\alpha_{k,0}, \alpha_{k,m}\},k=N+1,\cdots,2N$. Then, there exists a unique solution $u$ in $\Omega$ of the boundary value problem
    	\begin{equation}
    		(-\Delta_{V,\lambda}+Q_{V,\lambda})u=0\ in\  \mathring{\Omega},\label{Eq1}
    	\end{equation}
    	with partial Dirichlet data $f$ such that
    	\begin{equation}
    		\begin{cases}
    			f(\alpha_{k,0})=1,\\
    			f(z)=0,\ for \ z\in \partial\Omega\backslash \left( (\partial\Omega)_{R}\cup \alpha_{k,0} \right) 
    		\end{cases}\label{BV}
    	\end{equation}
    	and partial Neumann data $g = 0$ on $(\partial\Omega)_{L}$. It satisfies
    	\begin{equation}
    		u(x_{1}+\mathrm{i}x_{2})=0\ if \ x_{1}+x_{2} < k.\label{u}
    	\end{equation}
    \end{lemma}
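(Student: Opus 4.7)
Existence and uniqueness of $u$ follow immediately by combining both parts of Lemma~\ref{SS1}. Take the partial Dirichlet datum $f_1$ on $\partial\Omega\setminus(\partial\Omega)_R$ to be the one prescribed in \eqref{BV} and set $g=0$ on $(\partial\Omega)_L$. Lemma~\ref{SS1}(2) produces the unique extension $f$ of $f_1$ to all of $\partial\Omega$ such that $\Lambda_V f = g$ on $(\partial\Omega)_L$, and then Lemma~\ref{SS1}(1) yields the unique vertex function $u$ on $\Omega$ that solves \eqref{Eq1} with the prescribed partial Dirichlet and partial Neumann data.

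To prove the vanishing \eqref{u}, I induct on the column index $j$, for $j = 0, 1, \ldots, N+1$, with the hypothesis that $u(j + \mathrm{i}x_2)=0$ for every $-1 \le x_2 \le N+1$ such that $(j,x_2)\in \Omega$ and $j+x_2<k$. For the base case $j=0$, the Neumann condition $\partial_v^{\Omega}u(l_{x_2}) = -u(\mathrm{i}x_2)=0$ on $(\partial\Omega)_L$ forces $u(\mathrm{i}x_2)=0$ for $0 \le x_2 \le N$, while the vanishing Dirichlet datum at the corners $-\mathrm{i}$ and $(N+1)\mathrm{i}$ handles the endpoints; since $k \ge N+1$, the entire column $x_1=0$ belongs to the zero-region. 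For the inductive step, fix $j\ge 1$ and $x_2$ with $j+x_2 < k$ and apply the relation $(-\Delta_{V,\lambda}+Q_{V,\lambda})u(v)=0$ at the interior vertex $v=(j-1,x_2)$, solving for $u(j+\mathrm{i}x_2)$ (which is possible because $\psi_{wv}(1,\lambda)\ne 0$ under the standing assumption $\lambda\notin\sigma(H_e)$). Each of the remaining quantities in the equation --- $u(v)$ itself and the neighbors $(j-2,x_2)$, $(j-1,x_2-1)$, $(j-1,x_2+1)$ --- vanishes either by the induction hypothesis on columns $j-2$ and $j-1$, when the neighbor is interior (the strict inequalities $(j-2)+x_2 < k$ and $(j-1)+(x_2\pm 1) \le j+x_2 - 1 < k$ are immediate), or by the prescribed Dirichlet datum, when the neighbor lies on $(\partial\Omega)_B$, $(\partial\Omega)_L$, or a non-$\alpha_{k,0}$ vertex of $(\partial\Omega)_T$.

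The one place where care is required is to confirm that the exceptional vertex $\alpha_{k,0}$, carrying the nonzero Dirichlet value $1$, never enters the equation used to prove vanishing at a vertex in the zero-region. This holds because $\alpha_{k,0}$ has coordinate sum exactly $k$, so its only possible appearance in the five-point stencil centered at $v=(j-1,x_2)$ requires $v=(k-N-1,N)$; but the equation at this $v$ is invoked only to determine $u$ at $(k-N,N)$, whose coordinate sum equals $k$ and therefore lies \emph{on} the diagonal $A_k$, outside the zero-region. An analogous verification at the right boundary handles $j=N+1$, where $(j,x_2)=r_{x_2}$ with $x_2 \le k-N-2 \le N-2$, so all four neighbors of $v=(N,x_2)$ are covered by the induction hypothesis or by zero Dirichlet data.

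\textbf{Main obstacle.} The argument is not mathematically deep: once the column-wise induction is stated with the correct strict inequality $j+x_2<k$, the rest is a mechanical verification driven by the discrete Schr\"odinger equation. The only subtlety is the bookkeeping at the top boundary to ensure that the unit Dirichlet value at $\alpha_{k,0}$ never propagates into the zero-region, which rests on the geometric observation that $\alpha_{k,0}$ sits on the diagonal $A_k$ itself.
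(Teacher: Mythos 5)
Your proof is correct and follows the same route as the paper, which disposes of existence/uniqueness by citing Lemma~\ref{SS1} and establishes \eqref{u} by exactly the column-by-column successive computation you carry out (the paper gives only a one-sentence sketch of this sweep, so your write-up merely supplies the details, including the correct observation that $\alpha_{k,0}$ lies on $A_k$ itself and so never feeds into the zero-region). The only blemish is the harmless off-by-one in ``$(j-1)+(x_2+1)\le j+x_2-1$'' --- the sum equals $j+x_2$, which is still $<k$, so the conclusion stands.
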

    \begin{proof}
    	The uniqueness and the existence of $u$ on $\Omega$ follow from Lemma \ref{SS1}. By the condition on $f,g$, one can compute $u(x_{1} + \mathrm{i}x_{2})$ successively to obtain \eqref{u}.
    \end{proof}
    
    An important feature of the solution $u$ in Lemma~\ref{SS2} is that $u$ vanishes below the line $A_{k}$. For such solutions, we obtain the following property. 
    
    Let $u$ be a solution of the equation
    \begin{equation}
    	(-\Delta_{V,\lambda}+Q_{V,\lambda})u=0\ in\  \mathring{\Omega},\label{Eq2}
    \end{equation}
    which vanishes in the region $x_{1} +x_{2} < k$. Let $a, b,b',c,d \in V$ and $e, e' \in E$ be as in Figure~\ref{4}.
    \begin{figure}[h]
    	\centering
    	\includegraphics[width=3.5cm,height=2.5cm]{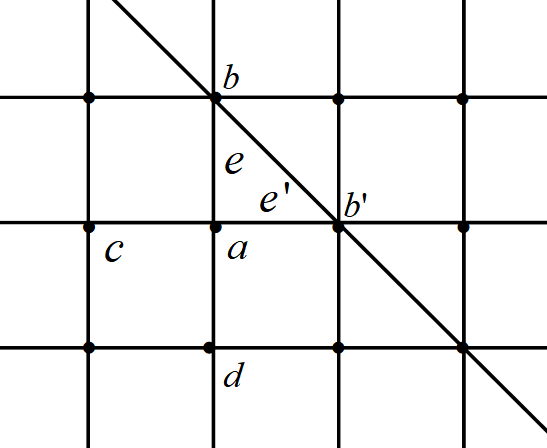}
    	\caption{}
    	\label{4}
    \end{figure}
    
    Then, evaluating the equation (\ref{Eq2}) at $v = a$ and using (\ref{RVL}), (\ref{Q_V}), we obtain
    \begin{equation}
    	\frac{1}{\psi_{ba}(1,\lambda)}u(b)+\frac{1}{\psi_{b'a}(1,\lambda)}u(b')=0.\label{Eq3}
    \end{equation}
    
    The special solutions of Lemma~\ref{SS2} and their property \eqref{Eq3} play a crucial role in the proof of the main theorem and in the reconstruction procedure in the next section.
    
\section{Reconstruction procedure} \label{sec:construct}
	
	The goal of this section is to prove Theorem~\ref{thm:edgeDN} on the uniqueness of the inverse spectral problem solution. The proof consists in a reconstruction procedure, which is based on the reduction to the discrete inverse problem by the vertex D-N map (i.e. Inverse Problem~\ref{ip:DNv}), on specific properties of the square lattice, and on applying the classical results for the recovery of the Schr\"odinger potential on each edge.
	
    Before proceeding to the reconstruction, we provide two well-known results for inverse spectral problems on a finite interval.
    Let $e = (w,v)$ be a fixed edge. Clearly, the eigenvalues of the Dirichlet problem
    $$
    	\left( -\frac{d^2}{dz^2} + q_e(z) - \lambda \right) u = 0, \quad u(0) = u(1) = 0
    $$
    coincide with the zeros of the characteristic function $\psi_{wv}(1, \lambda)$.
    
    \begin{lemma}{(\cite[Theorem 1.4.3]{Fre})} \label{Th143}
    	 If the potential $q_{e}$ is symmetric, then the Dirichlet eigenvalues for the operator $-\frac{d^{2}}{dz^{2}}+q_{e}(z)$ on $(0, 1)$ uniquely determine the potential $q_{e}$. In other words, $q_e$ is uniquely specified by the function $\psi_{wv}(1, \lambda)$.
    \end{lemma}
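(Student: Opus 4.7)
This is a Hochstadt--Lieberman type uniqueness statement for symmetric potentials, and my plan is to reduce it to Borg's classical two-spectra theorem on the half-interval $[0,1/2]$. The guiding observation is that the symmetry (Q-3) splits the Dirichlet problem on $(0,1)$ into two half-interval problems whose spectra together exhaust $\{\lambda_n\}$.

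First, I would exploit (Q-3) to decompose every Dirichlet eigenfunction $u$ on $(0,1)$ into its symmetric and antisymmetric parts about $z=1/2$. Since the Dirichlet eigenvalues of a regular Sturm--Liouville operator are simple, each eigenfunction is purely symmetric ($u(z)=u(1-z)$, so $u'(1/2)=0$) or purely antisymmetric ($u(z)=-u(1-z)$, so $u(1/2)=0$). Restricting to $[0,1/2]$, the symmetric eigenfunctions solve a Dirichlet--Neumann problem and the antisymmetric ones a Dirichlet--Dirichlet problem, both for the half-potential $q_e|_{[0,1/2]}$. Consequently, the full Dirichlet spectrum $\{\lambda_n\}$ on $(0,1)$ is the disjoint union of the two half-interval spectra $\{\mu_n\}$ (DN) and $\{\nu_n\}$ (DD).

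Second, I would extract $\{\mu_n\}$ and $\{\nu_n\}$ from $\{\lambda_n\}$ by interlacing and asymptotics. For Sturm--Liouville problems on $[0,1/2]$ sharing the Dirichlet condition at $0$, the DN and DD spectra strictly interlace as $\mu_0<\nu_0<\mu_1<\nu_1<\cdots$, with asymptotics $\sqrt{\mu_n}\sim(2n+1)\pi$ and $\sqrt{\nu_n}\sim 2(n+1)\pi$. Sorting $\{\lambda_n\}$ in increasing order thus recovers the two alternating subsequences. Borg's theorem---that two interlacing Sturm--Liouville spectra on a finite interval sharing one boundary condition determine the potential---then yields $q_e|_{[0,1/2]}$, and the symmetry (Q-3) extends it to $[1/2,1]$. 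The equivalent statement about $\psi_{wv}(1,\lambda)$ is immediate because this entire function of order $1/2$ is reconstructed from its zero set $\{\lambda_n\}$ via the standard Hadamard product, with the multiplicative constant fixed by the known asymptotics as $|\lambda|\to\infty$ off the real axis.

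The main obstacle is a bookkeeping one: verifying that the asymptotic interlacing is not disrupted by the potential. Since each eigenvalue $\mu_n,\nu_n$ is perturbed from its free value by $O(1)$ while the consecutive gaps in the free spectra grow linearly in $n$, the ordering $\mu_n<\nu_n<\mu_{n+1}$ persists for all large $n$; any finitely many remaining ambiguities can be resolved by matching the total eigenvalue counting function against the known leading asymptotics of each subspectrum. After this labeling step, the proof reduces to an application of Borg's theorem.
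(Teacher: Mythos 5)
Your argument is correct, but it is worth noting that the paper gives no proof of this lemma at all: it is quoted verbatim from Freiling--Yurko \cite[Theorem 1.4.3]{Fre}, whose proof runs along a different classical line. In the cited source one works on the whole interval $(0,1)$: the characteristic function $S_e(1,\lambda)=\psi_{wv}(1,\lambda)$ is recovered from its zeros $\{\lambda_n\}$ by Hadamard factorization and the known asymptotics, the symmetry $q_e(z)=q_e(1-z)$ forces $C_e(1,\lambda)=S_e'(1,\lambda)$ so that the Wronskian identity pins down $S_e'(1,\lambda_n)=(-1)^n$, whence the norming constants $\alpha_n$ are computed from $\dot S_e(1,\lambda_n)$ alone; the pair $\{\lambda_n,\alpha_n\}$ then determines $q_e$ by the Gelfand--Levitan (or spectral mappings) machinery. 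Your route instead reflects about $z=1/2$ and reduces to Borg's two-spectra theorem on $[0,1/2]$; this is an equally standard and fully rigorous alternative, and arguably more elementary in that it avoids norming constants entirely. One small improvement: the interlacing $\mu_0<\nu_0<\mu_1<\nu_1<\cdots$ of the Dirichlet--Neumann and Dirichlet--Dirichlet spectra on $[0,1/2]$ is \emph{exact} (it follows from the Herglotz property of $\lambda\mapsto S_e'(1/2,\lambda)/S_e(1/2,\lambda)$, or from Sturm comparison), not merely asymptotic, so the sorted union splits into the two subsequences with no residual ambiguity and your final ``bookkeeping'' paragraph is unnecessary. Also note that the disjointness of the two half-interval spectra, which you need for the splitting, follows from the simplicity of the Dirichlet eigenvalues on $(0,1)$ that you already invoked.
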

    
    \begin{lemma}{(\cite[Theorem 1.4.7]{Fre})} \label{Th147}
    	 The Weyl function $\frac{\psi'_{wv}(1,\lambda)}{\psi_{wv}(1,\lambda)}$ uniquely specifies the potential $q_e$.
    \end{lemma}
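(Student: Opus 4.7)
The plan is to realize the function $m(\lambda):=\frac{\psi'_{wv}(1,\lambda)}{\psi_{wv}(1,\lambda)}$ as the classical Weyl--Titchmarsh $m$-function associated with the problem $-u''+q_e u=\lambda u$ on $(0,1)$ with Dirichlet condition at $z=0$, and then to appeal to the Borg--Marchenko uniqueness theorem.

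Since $\psi_{wv}(\cdot,\lambda)$ solves $-u''+q_e u=\lambda u$ with $\psi_{wv}(0,\lambda)=0$ and $\psi'_{wv}(0,\lambda)=1$, the zeros of $\lambda\mapsto \psi_{wv}(1,\lambda)$ are precisely the Dirichlet--Dirichlet eigenvalues $\{\lambda_n^D\}_{n\geq 0}$ (with $u(0)=u(1)=0$), while the zeros of $\lambda\mapsto \psi'_{wv}(1,\lambda)$ are the Dirichlet--Neumann eigenvalues $\{\lambda_n^{DN}\}_{n\geq 0}$ (with $u(0)=0$, $u'(1)=0$). Both sequences are real and simple and $m(\lambda)$ is a meromorphic Herglotz--Nevanlinna function with simple poles at the $\lambda_n^D$ and simple zeros at the $\lambda_n^{DN}$. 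A standard Wronskian calculation further shows that the residue of $m$ at $\lambda_n^D$ is determined by the norming constant $\alpha_n=\int_0^1 \psi_{wv}(z,\lambda_n^D)^2\,dz$, so from $m(\lambda)$ alone one reads off the full Gelfand--Levitan spectral data $\{(\lambda_n^D,\alpha_n)\}_{n\geq 0}$.

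By the classical inverse Sturm--Liouville theorem, this spectral data uniquely determines the potential $q_e\in L^2(0,1)$: one constructs the transformation kernel $K(z,t)$ from the spectral measure $d\rho=\sum_n \alpha_n^{-1}\delta_{\lambda_n^D}$ via the Gelfand--Levitan integral equation, and then recovers $q_e(z)=2\frac{d}{dz}K(z,z)$. Equivalently, one may argue via Borg's two-spectra theorem, using both spectra $\{\lambda_n^D\}$ and $\{\lambda_n^{DN}\}$ extracted above.

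The main analytical obstacle is this final inversion step, which requires either a Fredholm analysis of the Gelfand--Levitan equation or the asymptotic machinery behind Borg's theorem; this is the content of Theorem~1.4.7 in \cite{Fre} and need not be redone. A slicker self-contained alternative is the local Borg--Marchenko route: assuming two potentials produce the same $m$-function, one derives sharp exponential decay estimates on the difference of the corresponding $m$-functions along non-real rays $|\lambda|\to\infty$, which force the two potentials to coincide almost everywhere on $(0,1)$.
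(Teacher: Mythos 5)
The paper does not prove Lemma~\ref{Th147} at all: it is quoted as a black box from \cite[Theorem 1.4.7]{Fre}, so there is no internal argument to compare against. Your sketch is a faithful reconstruction of the standard proof behind that citation (Weyl $m$-function $\to$ poles, zeros and residues $\to$ Gel'fand--Levitan spectral data $\to$ potential, with Borg's two-spectra theorem or local Borg--Marchenko as alternatives), and it is correct in substance. One step is stated more loosely than it should be: with the normalization $\psi_{wv}(0,\lambda)=0$, $\psi'_{wv}(0,\lambda)=1$, the identity $\psi'_{wv}(1,\lambda)\,\dot\psi_{wv}(1,\lambda)-\psi_{wv}(1,\lambda)\,\dot\psi'_{wv}(1,\lambda)=\int_0^1\psi_{wv}(z,\lambda)^2\,dz$ (dot $=\partial_\lambda$) gives the residue of $m$ at $\lambda_n^D$ as $\bigl(\psi'_{wv}(1,\lambda_n^D)\bigr)^2/\alpha_n$, not $1/\alpha_n$; as written, the residues encode the norming constants of the eigenfunctions normalized at $z=1$, i.e.\ a priori the Gel'fand--Levitan data of the reflected potential $q_e(1-z)$. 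This is harmless here for two reasons, either of which closes the gap: under the standing symmetry (Q-3) the Wronskian relation $C_e(1,\lambda)S'_e(1,\lambda)-C'_e(1,\lambda)S_e(1,\lambda)=1$ together with $C_e(1,\lambda)=S'_e(1,\lambda)$ forces $\bigl(\psi'_{wv}(1,\lambda_n^D)\bigr)^2=1$, so the residue is exactly $1/\alpha_n$; and even without symmetry, the pole set of $m$ determines the entire function $\psi_{wv}(1,\cdot)$ via its Hadamard product and known asymptotics, whence $\psi'_{wv}(1,\cdot)=m(\lambda)\psi_{wv}(1,\lambda)$ and then $\alpha_n$ are recovered. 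With that point made precise, your argument is complete modulo the Gel'fand--Levitan (or Borg--Marchenko) inversion itself, which is precisely what the reference to \cite{Fre} is meant to supply.
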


	The both inverse problems of Lemmas~\ref{Th143} and~\ref{Th147} can be solved constructively by the well-known methods, e.g., the Gelfand-Levitan method and the method of spectral mappings (see \cite{Fre}).
    
    Now, let us prove Theorem~\ref{thm:edgeDN} by providing a
    reconstruction procedure for solving Inverse Problem~\ref{ip:DNe}.
 
	\medskip
	   
    \textbf{Reconstruction procedure}. Suppose that the potential $q$ on the square lattice fulfills the conditions (Q-1), (Q-2), and (Q-3). Let $\Omega$ be the square region defined in Section~\ref{sec:edge} and let the corresponding edge D-N map $\Lambda_E$ be given.
    
    \smallskip
    
    \textbf{Step 1.} We obtain the vertex D-N map  $\Lambda_V$ by the formula \eqref{relation-DN}.
    
    \smallskip
    
    \textbf{Step 2.} For $k = 2N, 2N-1, \dots, N+1$, implement the following steps 3--5.
    
    \smallskip
    
    \textbf{Step 3.} For the value of $k$ fixed at step 2, draw the line $A_{k}$ defined by (\ref{A_k}) and take the boundary data $f$ having the properties in Lemma \ref{SS2}. Under the assumption (Q-2), we have $q_{e}(z)=0$ on all the boundary edges $\partial E_{\Omega}$. So, we know the functions 
    $\psi_{wv}(1,\lambda)=\frac{\sin \sqrt{\lambda}}{\sqrt{\lambda}}$
    for $(w,v) \in \partial E_\Omega$.
    By Lemma \ref{SS1} (2), we can find $u$ on $(\partial\Omega)_{R}$ with the vertex D-N map $\Lambda_{V}$. Thus, we know $u$ on $\partial\Omega$.

    \smallskip
    
    \textbf{Step 4.} If $k=2N$, using \eqref{V_DN} and the value of $u(\alpha_{k,1}+\mathrm{i})$, we find
    $$
    u(\alpha_{k,1})=-\Lambda_V u(\alpha_{k,1}+\mathrm{i}),
    $$
    By (\ref{Eq3}), we have 
    \begin{equation}
    	\frac{u(\alpha_{k,0})}{\psi_{\alpha_{k,0}-\mathrm{i},\alpha_{k,0}}(1,\lambda)}+\frac{u(\alpha_{k,1})}{\psi_{\alpha_{k,1}-1,\alpha_{k,1}}(1,\lambda)}=0,
    \end{equation}
    and 
    \begin{equation}
    	\frac{u(\alpha_{k,1})}{\psi_{\alpha_{k,1}-\mathrm{i},\alpha_{k,1}}(1,\lambda)}+\frac{u(\alpha_{k,2})}{\psi_{\alpha_{k,2}-1,\alpha_{k,2}}(1,\lambda)}=0.
    \end{equation}
    By the known values $u(\alpha_{k,0})$, $u(\alpha_{k,1})$, $u(\alpha_{k,2})$, $ \psi_{\alpha_{k,1}-\mathrm{i},\alpha_{k,1}}(1,\lambda)$, $ \psi_{\alpha_{k,2}-1,\alpha_{k,2}}(1,\lambda)$, we can find
    \begin{equation}
    	\psi_{\alpha_{k,1}-1,\alpha_{k,1}}(1,\lambda)=-\frac{u(\alpha_{k,1})}{u(\alpha_{k,0})}\psi_{\alpha_{k,0}-\mathrm{i},\alpha_{k,0}}(1,\lambda),\label{psi1}
    \end{equation} 
    and  
    \begin{equation}
    	\psi_{\alpha_{k,1}-\mathrm{i},\alpha_{k,1}}(1,\lambda)=-\frac{u(\alpha_{k,1})}{u(\alpha_{k,2})}\psi_{\alpha_{k,2}-1,\alpha_{k,2}}(1,\lambda).\label{psi2}
    \end{equation}  
    Note that the zeros of $\psi_{\alpha_{k,1}-1,\alpha_{k,1}}(1,\lambda)$ and $ \psi_{\alpha_{k,1}-\mathrm{i},\alpha_{k,1}}(1,\lambda)$ are the Dirichlet eigenvalues for the operator $-\frac{d^{2}}{dz^{2}}+q_{e}(z)$ on $(0, 1)$. Since the potential is symmetric, by Lemma \ref{Th143}, these eigenvalues uniquely determine the potentials $q_{e}(z)$ on $(\alpha_{k,1}-1,\alpha_{k,1})$ and $(\alpha_{k,1}-\mathrm{i},\alpha_{k,1})$.
    
    \smallskip
    
    \textbf{Step 5.} If $k\le 2N-1$, then implement steps 5.1--5.6.
    
    \textbf{Step 5.1} Using \eqref{V_DN} and the values of $u(\alpha_{k,1}+\mathrm{i})$,  $u(\alpha_{k,2N+1-k}+1)$, we find
    $$
    u(\alpha_{k,1})=-\Lambda_V u(\alpha_{k,1}+\mathrm{i}),\quad u(\alpha_{k,2N+1-k})=-\Lambda_V u(\alpha_{k,2N+1-k}+1).
    $$ 
    By (\ref{Eq3}), we get 
    \begin{equation}
    	\psi_{\alpha_{k,1}-1,\alpha_{k,1}}(1,\lambda)=-\frac{u(\alpha_{k,1})}{u(\alpha_{k,0})}\psi_{\alpha_{k,0}-\mathrm{i},\alpha_{k,0}}(1,\lambda),\label{psi3}
    \end{equation} 
    and 
    \begin{equation}
    	\psi_{\alpha_{k,2N+1-k}-\mathrm{i},\alpha_{k,2N+1-k}}(1,\lambda)=-\frac{u(\alpha_{k,2N+1-k})}{u(\alpha_{k,2N+2-k})}\psi_{\alpha_{k,2N+2-k}-1,\alpha_{k,2N+2-k}}(1,\lambda).\label{psi4}
    \end{equation} 
    Then, by Lemma \ref{Th143}, the zeros of $\psi_{\alpha_{k,1}-1,\alpha_{k,1}}(1,\lambda)$ and $\psi_{\alpha_{k,2N+1-k}-\mathrm{i},\alpha_{k,2N+1-k}}(1,\lambda)$ uniquely determine the potentials on $(\alpha_{k,1}-1,\alpha_{k,1})$ and $(\alpha_{k,2N+1-k}-\mathrm{i},\alpha_{k,2N+1-k})$.
    
    \textbf{Step 5.2.} Evaluating equation (\ref{Eq2}) at $\alpha_{k,1}$, we obtain
    $$
    -\frac{1}{4}\left( \frac{u(\alpha_{k,1}+\mathrm{i})}{\psi_{\alpha_{k,1},\alpha_{k,1}+\mathrm{i}}(1,\lambda)}+\frac{u(\alpha_{k,1}+1)}{\psi_{\alpha_{k,1},\alpha_{k,1}+1}(1,\lambda)} \right)+q_{v,\lambda}(\alpha_{k,1})u(\alpha_{k,1})=0.
    $$
    Then, we can get the value of $q_{v,\lambda}(\alpha_{k,1})$. By (\ref{Q_V}), we have 
    \begin{align*}
    	q_{v,\lambda}(\alpha_{k,1})&=\frac{1}{4}\Biggl( \frac{\psi'_{\alpha_{k,1}-1,\alpha_{k,1}}(1,\lambda)}{\psi_{\alpha_{k,1}-1,\alpha_{k,1}}(1,\lambda)}+\frac{\psi'_{\alpha_{k,1},\alpha_{k,1}+1}(1,\lambda)}{\psi_{\alpha_{k,1},\alpha_{k,1}+1}(1,\lambda)}\\
    	&+\frac{\psi'_{\alpha_{k,1}-\mathrm{i},\alpha_{k,1}}(1,\lambda)}{\psi_{\alpha_{k,1}-\mathrm{i},\alpha_{k,1}}(1,\lambda)}+\frac{\psi'_{\alpha_{k,1},\alpha_{k,1}+\mathrm{i}}(1,\lambda)}{\psi_{\alpha_{k,1},\alpha_{k,1}+\mathrm{i}}(1,\lambda)} \Biggr)
    \end{align*}
    where the values of $\psi_{\alpha_{k,1}-1,\alpha_{k,1}}(1,\lambda)$, $\psi_{\alpha_{k,1},\alpha_{k,1}+\mathrm{i}}(1,\lambda)$ and $\psi_{\alpha_{k,1},\alpha_{k,1}+1}(1,\lambda)$ are known. 
    
    \textbf{Step 5.3.} Obtain the value 
    \begin{align}
    	\frac{\psi'_{\alpha_{k,1}-\mathrm{i},\alpha_{k,1}}(1,\lambda)}{\psi_{\alpha_{k,1}-\mathrm{i},\alpha_{k,1}}(1,\lambda)}=4q_{v,\lambda}(\alpha_{k,1})-\frac{\psi'_{\alpha_{k,1}-1,\alpha_{k,1}}(1,\lambda)}{\psi_{\alpha_{k,1}-1,\alpha_{k,1}}(1,\lambda)}\notag\\
    	-\frac{\psi'_{\alpha_{k,1},\alpha_{k,1}+1}(1,\lambda)}{\psi_{\alpha_{k,1},\alpha_{k,1}+1}(1,\lambda)}-\frac{\psi'_{\alpha_{k,1},\alpha_{k,1}+\mathrm{i}}(1,\lambda)}{\psi_{\alpha_{k,1},\alpha_{k,1}+\mathrm{i}}(1,\lambda)}\label{p'p1}
    \end{align}
    which is the Weyl function associated with the potential on $(\alpha_{k,1}-\mathrm{i},\alpha_{k,1})$. 
    
    \textbf{Step 5.4.} Similarly to steps 5.2--5.3, evaluating equation (\ref{Eq2}) at $\alpha_{k,2N+2-k}$, we get the Weyl function  
    \begin{align}
    	\frac{\psi'_{\alpha_{k,2N+1-k}-1,\alpha_{k,2N+1-k}}(1,\lambda)}{\psi_{\alpha_{k,2N+1-k}-1,\alpha_{k,2N+1-k}}(1,\lambda)}=4q_{v,\lambda}(\alpha_{k,2N+1-k})-\frac{\psi'_{\alpha_{k,2N+1-k}-\mathrm{i},\alpha_{k,2N+1-k}}(1,\lambda)}{\psi_{\alpha_{k,2N+1-k}-\mathrm{i},\alpha_{k,2N+1-k}}(1,\lambda)}\notag\\
    	-\frac{\psi'_{\alpha_{k,2N+1-k},\alpha_{k,2N+1-k}+1}(1,\lambda)}{\psi_{\alpha_{k,2N+1-k},\alpha_{k,2N+1-k}+1}(1,\lambda)}-\frac{\psi'_{\alpha_{k,2N+1-k},\alpha_{k,2N+1-k}+\mathrm{i}}(1,\lambda)}{\psi_{\alpha_{k,2N+1-k},\alpha_{k,2N+1-k}+\mathrm{i}}(1,\lambda)}\label{p'p2}
    \end{align}
    associated to the potential on $(\alpha_{k,2N+1-k}-1,\alpha_{k,2N+1-k})$.
    
    \textbf{Step 5.5.} Repeating the procedure analogous to steps 5.2--5.4, we can obtain the values of $\frac{\psi'_{\alpha_{k,l}-1,\alpha_{k,l}}(1,\lambda)}{\psi_{\alpha_{k,l}-1,\alpha_{k,l}}(1,\lambda)},\ l=2,\cdots,2N+1-k$ and $\frac{\psi'_{\alpha_{k,l}-\mathrm{i},\alpha_{k,l}}(1,\lambda)}{\psi_{\alpha_{k,l}-\mathrm{i},\alpha_{k,l}}(1,\lambda)},\ l=1,\cdots,2N-k$. 
    
    \textbf{Step 5.6.} The potentials on $(\alpha_{k,l}-1,\alpha_{k,l}) ,\ l=2,\cdots,2N+1-k$ and $(\alpha_{k,l}-\mathrm{i},\alpha_{k,l}),\ l=1,\cdots,2N-k$ are uniquely determined by the Weyl functions by Lemma \ref{Th147}.
    
    Thus, we have constructed all the potentials on the upper triangular region of the square domain.
    
    \smallskip
    
    \textbf{Step 6.} Rotate the whole system by the angle $\pi$ and take a square domain congruent to the previous one. Repeat the steps 2--5 to determine the potentials on all the remaining edges of $E_{\Omega}$.
    
    \medskip
    
    This reconstruction procedure determines the potential uniquely on each edge, so it implies the proof of Theorem~\ref{thm:edgeDN}.
    
    Here we give a simple example of the reconstruction for $N=3$. 
    
    As in Figure \ref{5}, by step 4, we get $u(\alpha_{6,1})$ by the D-N map. By (\ref{psi1}) and (\ref{psi2}), we obtain
    $$
    \psi_{\alpha_{6,1}-1,\alpha_{6,1}}(1,\lambda)=-\frac{u(\alpha_{6,1})}{u(\alpha_{6,0})}\psi_{\alpha_{6,0}-\mathrm{i},\alpha_{6,0}}(1,\lambda),
    $$
    $$
    \psi_{\alpha_{6,1}-\mathrm{i},\alpha_{6,1}}(1,\lambda)=-\frac{u(\alpha_{6,1})}{u(\alpha_{6,2})}\psi_{\alpha_{6,2}-1,\alpha_{6,2}}(1,\lambda).
    $$
    Then, the potentials on $e_{1},e_{2}$ are determined by Lemma \ref{Th143}.
    \begin{figure}[h]
    	\centering
    	\includegraphics[width=5.5cm,height=5cm]{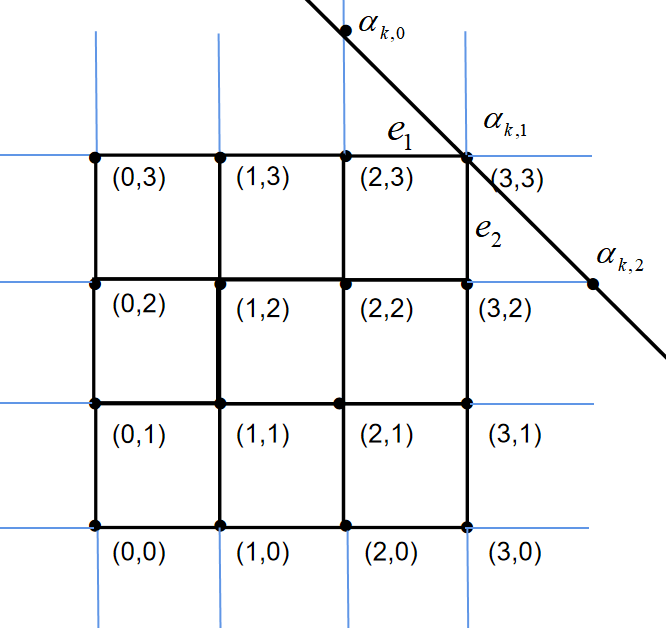}
    	\caption{N=3,k=6}
    	\label{5}
    \end{figure}
    
    As in Figure \ref{6}, by (\ref{psi3}),(\ref{psi4}), we obtain 
    $$
    \psi_{\alpha_{5,1}-1,\alpha_{5,1}}(1,\lambda)=-\frac{u(\alpha_{5,1})}{u(\alpha_{5,0})}\psi_{\alpha_{5,0}-\mathrm{i},\alpha_{5,0}}(1,\lambda),
    $$
    $$
    \psi_{\alpha_{5,2}-\mathrm{i},\alpha_{5,2}}(1,\lambda)=-\frac{u(\alpha_{5,2})}{u(\alpha_{5,3})}\psi_{\alpha_{5,3}-1,\alpha_{5,3}}(1,\lambda).
    $$
    Then, the potentials on $e_{3},e_{6}$ are determined.
    By (\ref{p'p1}) and (\ref{p'p2}), we have
    \begin{align*}
    	\frac{\psi'_{\alpha_{5,1}-\mathrm{i},\alpha_{5,1}}(1,\lambda)}{\psi_{\alpha_{5,1}-\mathrm{i},\alpha_{5,1}}(1,\lambda)}=4q_{v,\lambda}(\alpha_{5,1})-\frac{\psi'_{\alpha_{5,1}-1,\alpha_{5,1}}(1,\lambda)}{\psi_{\alpha_{5,1}-1,\alpha_{5,1}}(1,\lambda)}\notag\\
    	-\frac{\psi'_{\alpha_{5,1},\alpha_{5,1}+1}(1,\lambda)}{\psi_{\alpha_{5,1},\alpha_{5,1}+1}(1,\lambda)}-\frac{\psi'_{\alpha_{5,1},\alpha_{5,1}+\mathrm{i}}(1,\lambda)}{\psi_{\alpha_{5,1},\alpha_{5,1}+\mathrm{i}}(1,\lambda)},
    \end{align*}
    and 
    \begin{align*}
    	\frac{\psi'_{\alpha_{5,2}-1,\alpha_{5,2}}(1,\lambda)}{\psi_{\alpha_{5,2}-1,\alpha_{5,2}}(1,\lambda)}=4q_{v,\lambda}(\alpha_{5,2})-\frac{\psi'_{\alpha_{5,2}-\mathrm{i},\alpha_{5,2}}(1,\lambda)}{\psi_{\alpha_{5,2}-\mathrm{i},\alpha_{5,2}}(1,\lambda)}\notag\\
    	-\frac{\psi'_{\alpha_{5,2},\alpha_{5,2}+1}(1,\lambda)}{\psi_{\alpha_{5,2},\alpha_{5,2}+1}(1,\lambda)}-\frac{\psi'_{\alpha_{5,2},\alpha_{5,2}+\mathrm{i}}(1,\lambda)}{\psi_{\alpha_{5,2},\alpha_{5,2}+\mathrm{i}}(1,\lambda)}.
    \end{align*}
    Then, we can determine the potentials on $e_{4},e_{5}$ by Lemma \ref{Th147}.
    
    \begin{figure}[h]
    	\centering
    	\includegraphics[width=5.5cm,height=5cm]{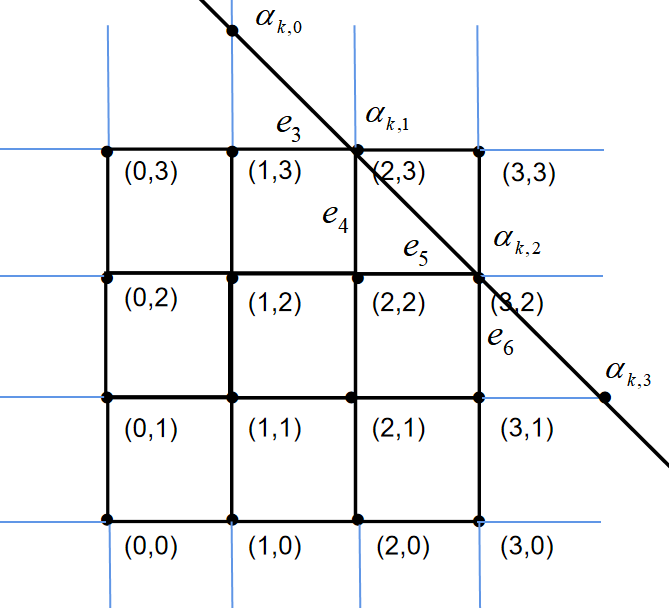}
    	\caption{N=3,k=5}
    	\label{6}
    \end{figure}
    
    As in Figure \ref{7}, by (\ref{psi3}),(\ref{psi4}) in step 5, we get 
    $$
    \psi_{\alpha_{4,1}-1,\alpha_{4,1}}(1,\lambda)=-\frac{u(\alpha_{4,1})}{u(\alpha_{4,0})}\psi_{\alpha_{4,0}-\mathrm{i},\alpha_{4,0}}(1,\lambda),
    $$
    $$
    \psi_{\alpha_{4,3}-\mathrm{i},\alpha_{4,3}}(1,\lambda)=-\frac{u(\alpha_{4,3})}{u(\alpha_{4,4})}\psi_{\alpha_{4,4}-1,\alpha_{4,4}}(1,\lambda).
    $$
    Then, by using Lemma \ref{Th143}, the potentials on $e_7,\ e_{12}$ are specified. 
    
    By (\ref{p'p1}),(\ref{p'p2}) and the argument in step 5, we get the values of   
    $\frac{\psi'_{\alpha_{4,1}-\mathrm{i},\alpha_{4,1}}(1,\lambda)}{\psi_{\alpha_{4,1}-\mathrm{i},\alpha_{4,1}}(1,\lambda)}$, $\frac{\psi'_{\alpha_{4,3}-1,\alpha_{4,3}}(1,\lambda)}{\psi_{\alpha_{4,3}-1,\alpha_{4,3}}(1,\lambda)}$, $\frac{\psi'_{\alpha_{4,2}-\mathrm{i},\alpha_{4,2}}(1,\lambda)}{\psi_{\alpha_{4,2}-\mathrm{i},\alpha_{4,2}}(1,\lambda)}$ and  $\frac{\psi'_{\alpha_{4,2}-1,\alpha_{4,2}}(1,\lambda)}{\psi_{\alpha_{4,2}-1,\alpha_{4,2}}(1,\lambda)}$. Then, the potentials on $e_{8},e_{9},e_{10},e_{11}$ are determined by Lemma \ref{Th147}.   
    \begin{figure}[h]
    	\centering
    	\includegraphics[width=5.5cm,height=5cm]{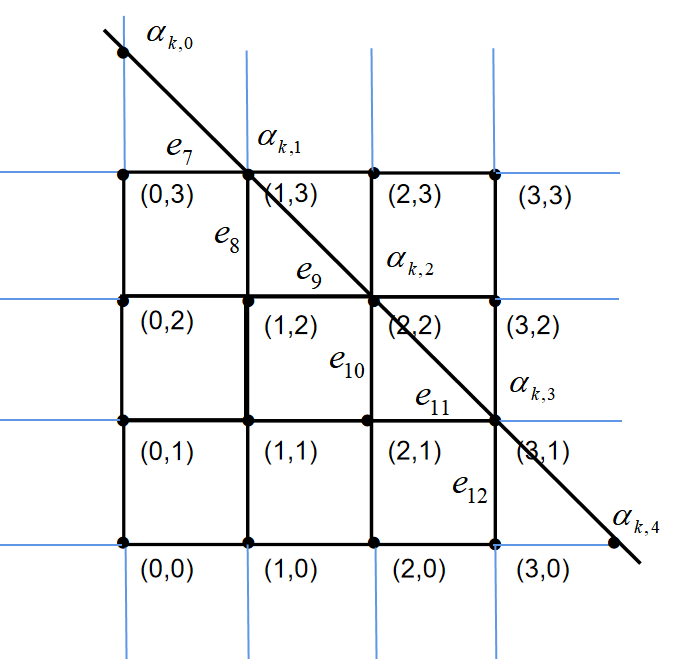}
    	\caption{N=3,k=4}
    	\label{7}
    \end{figure} 
    
    By step 6, rotate the whole system by the angle $\pi$ and the potentials on the rest of edges are determined.
    
    \begin{remark} \label{rem:compare}
    	Note that our reconstruction procedure is different from the one provided in \cite{And4, And5}. First, our algorithm depends on specific properties of the square lattice, while the procedure in \cite{And4, And5} was developed for the hexagonal lattice. Second, the reconstruction strategies differ. The authors of \cite{And4, And5} fix an arbitrary line $A_k$ and compute the solution $u$ of the boundary value problem generated by the reduced vertex Laplacian $(-\Delta_{V, \lambda} + Q_{V, \lambda})$ according to the analog of Lemma~\ref{SS2} for the hexagonal lattice. Then, they consider a zigzag line and find the ratios of the functions $\psi_{wv}(1,\lambda)$ for adjacent edges of that line. This helps to recover the potentials $q_e$. This method perfectly works for the discrete Schr\"odinger operators with constant coefficients (see \cite{And3, Iso2}). But the analogous treatment of the reduced vertex Laplacian causes difficulties, since the both $\Delta_{V, \lambda}$ and $Q_{V, \lambda}$ depend on some potentials $q_e$. Therefore, the solution $u$ cannot be found until the potentials are known. In our procedure, this problem does not arise, because we step-by-step reconstruct the potentials $q_e$ together with the reduced vertex Laplacian coefficients, which are needed at the next steps. A similar approach can be applied to the hexagonal lattice to improve the algorithm of \cite{And4, And5}.
    \end{remark}
    
	\section{Inverse scattering problem by the S-matrix} \label{sec:scattering}
	
	This section is concerned with the inverse scattering problem for the Schr\"odinger operator $H_E$ by the S-matrix. It has been shown in \cite{And4} that the edge D-N map uniquely specifies the S-matrix. Consequently, 
	the uniqueness theorem by the D-N map (Theorem~\ref{thm:edgeDN}) implies 
	the uniqueness of solution for the inverse scattering problem.
	In this section, we provide the definition of the S-matrix $S(\lambda)$, following the papers \cite{And2, And4, And5} and show that it uniquely determines the potential on the square lattice.
	
	In this section, it will be convenient for us to associate the vertex set $V$ of the square lattice with $\mathbb Z^2$. In other words, every vertex $v \in V$ is a pair of integers $n = (n_1, n_2)$.
	Define the vertex Laplacian 
	$$
	\left(\Delta_{V}f\right)(v)=\frac{1}{4}\sum_{w\sim v}f(w), \quad v\in V,
	$$
	which is self-adjoint on $L^{2}(V)$ equipped with the inner product
	$$
	(f,g)=4\sum_{n\in \mathbb{Z}^{2}}f(n)\cdot \overline{g(n)}.
	$$
	
	Put $\mathbb T^2 := \mathbb{R}^{2}\setminus (2\pi \mathbb{Z})^{2}$ and
	define the discrete Fourier transform $U_V \colon L^{2}(\mathbb{Z}^{2};\mathbb{C}^{2})\longrightarrow L^{2}(\mathbb{T}^{2};\mathbb{C}^{2})$ by
	\begin{equation}
		(U_{V}f)(x)=\frac{1}{\pi}\sum_{n\in \mathbb{Z}^{2}}e^{\mathrm{i}n\cdot x}f(n), \quad x=(x_{1},x_{2})\in \mathbb{T}^{2}. \label{U_V}
	\end{equation}
    The adjoint operator $U_V^* \colon L^{2}(\mathbb{T}^{2};\mathbb{C}^{2})\longrightarrow L^{2}(\mathbb{Z}^{2};\mathbb{C}^{2})$ is given by
    \begin{equation}
    	(U_{V}^{*})g(n)=\frac{1}{4\pi}\int_{\mathbb{T}^2}e^{-\mathrm{i} n\cdot x}g(x)dx, \quad n\in \mathbb{Z}^{2}.
    \end{equation}
	Then, on $L^{2}(\mathbb{T}^{2};\mathbb{C}^{2})$, $U_{V}(-\Delta_{V})U_{V}^{*}$ 
	is the operator of multiplication by
	\begin{equation}
		H_{0}(x)=-\frac{1}{2}(\cos x_{1}+\cos x_{2}).\label{H_0}
	\end{equation}

	For $q_e=0$, we denote $-\Delta_{V,\lambda}$ by $-\Delta_{V,\lambda}^{(0)}$. By Lemma 3.1 of \cite{And2}, we define the characteristic surface of $-\Delta_{V,\lambda}^{(0)}$ by
	$$
	M_{\lambda} = \{x \in \mathbb{T}^{2} \colon H_{0}(x)+\cos \sqrt{\lambda}=0\},
	$$
	which is smooth if $\cos \sqrt{\lambda}  \neq 0, \pm 1, \lambda \in \mathbb{R}$. We put
	$$
	T^{(0)}=\{\lambda \colon \cos \sqrt{\lambda} = 0, \pm 1 \},
	$$
	\begin{equation}
		T=T^{(0)}\cup \left( \cup_{e\in E}\sigma\left(H_{e}\right) \right),\label{T}
	\end{equation}
	where the operator $H_e$ was defined for equation \eqref{D-D} in Section \ref{sec:vertex}.
	
    The resolvent $r_{e}(\lambda) = \left(H_{e}-\lambda\right)^{-1}$
    is written as
    $$
    (r_{e}(\lambda)f)(z) = \int_{0}^{z}\frac{\phi_{e1}(z,\lambda)\phi_{e0}(t,\lambda)}{\phi_{e0}(1,\lambda)}f(t)dt+\int_{z}^{1}\frac{\phi_{e0}(z,\lambda)\phi_{e1}(t,\lambda)}{\phi_{e1}(0,\lambda)}f(t)dt.
    $$
    We put
    $$
    \Phi_{e0}(\lambda)f=\frac{d}{dz}(r_{e}(\lambda)f)\bigg|_{z=0}=\int_{0}^{1}\frac{\phi_{e1}(t,\lambda)}{\phi_{e1}(0,\lambda)}f(t)dt,
    $$
    $$
    \Phi_{e1}(\lambda)f=-\frac{d}{dz}(r_{e}(\lambda)f)\bigg|_{z=1}=\int_{0}^{1}\frac{\phi_{e0}(t,\lambda)}{\phi_{e0}(1,\lambda)}f(t)dt.
    $$
    
    Their adjoints acting from $\mathbb{C}$ to $L^{2}(0,1)$ are defined for $c \in \mathbb{C}$ by
    $$
    \Phi_{e1}(\lambda)^{*}c = c\frac{\phi_{e0}(z,\bar{\lambda})}{\phi_{e0}(1,\bar{\lambda})},\quad \Phi_{e0}(\lambda)^{*}c = c\frac{\phi_{e1}(z,\bar{\lambda})}{\phi_{e1}(0,\bar{\lambda})}.
    $$
    Define the operator  $T_{V}(\lambda):L^{2}_{loc}(E)\rightarrow L^{2}_{loc}(V)$ by
    \begin{equation}
    	(T_{V}(\lambda)f_{e})(v)=\frac{1}{4}\left(\sum_{e\in E_{v}(1)}\Phi_{e1}(\lambda)f_{e}+\sum_{e\in E_{v}(0)}\Phi_{e0}(\lambda)f_{e}\right),\quad v\in V.\label{T_V}
    \end{equation}
    The adjoint operator $T_{V}(\lambda)^{*}:L^{2}_{loc}(V)\rightarrow L^{2}_{loc}(E)$ has the form
    $$
    (T_{V}(\lambda)^{*}u)_{e}(z)=\Phi_{e1}(\lambda)^{*}u(e(1))+\Phi_{e0}(\lambda)^{*}u(e(0)).
    $$
    For $q_e=0$, we denote $T_{V}(\lambda)$ by $T_V^{(0)}(\lambda)$. Then,
    \begin{equation*} 
    \begin{split}
    (T_{V}^{(0)}(\lambda)f)(v)=&\frac{1}{4}\frac{\sqrt{\lambda}}{\sin \sqrt{\lambda}}\bigg(\sum_{e\in E_{v}(1)}\int_{0}^{1}\frac{\sin \sqrt{\lambda}z}{\sqrt{\lambda}}f(z)dz+\\
    &\sum_{e\in E_{v}(0)}\int_{0}^{1}\frac{\sin \sqrt{\lambda}(1-z)}{\sqrt{\lambda}}f(z)dz\bigg)
    \end{split}
    \end{equation*} 
    and 
    $$
    (T_{V}^{(0)}(\bar{\lambda})^{*}u)_{e}(z)=\frac{\sqrt{\lambda}}{\sin \sqrt{\lambda}}\left(\frac{\sin \sqrt{\lambda}z}{\sqrt{\lambda}}u(e(1))+\frac{\sin \sqrt{\lambda}(1-z)}{\sqrt{\lambda}}u(e(0))\right).
    $$
    
    Under the assumptions (Q-1), (Q-2), (Q-3), we define the Schr\"{o}dinger operator
    \begin{equation*}
    	H_{E} =\left\lbrace -\frac{d^{2}}{dz^{2}}+q_{e}(z)\colon e\in E\right\rbrace
    \end{equation*}
    with domain $ D(H_E)$ consisting of functions $u = \{ u_e\}_{e \in E}$ such that
    $u_e \in H^2(0,1)$ satisfy the Kirchhoff condition (K-1) and (K-2) in all the vertices $v \in V$ and $\sum_{e\in E} \|-\frac{d^2}{dz^2}u_e+q_e u_e\|_{L^2(0,1)}^2 < \infty$. Due to \cite{And4}, the operator $H_E$ is self-adjoint with the essential spectrum $\sigma_e(H_E) = [0, \infty)$. Furthermore, $\sigma_e(H_E) \setminus T$ is absolutely continuous, where $T \subset R$ is defined by \eqref{T}. Then, one can introduce the S-matrix $S(\lambda)$ for $\lambda \in (0, \infty) \setminus T$.
    
	Define the spaces $B^{*}(E)$ and $B_{0}^{*}(E)$ by
	\begin{align*}
	f \in B^{*}(E) \quad & \Leftrightarrow \quad \|f\|_{B^{*}(E)}^{2}=\sup_{R>1}\frac{1}{R}\sum_{|c(e)|<R}\|f_{e}\|_{L^{2}(0,1)}^{2}<\infty, \\
	f \in B_{0}^{*}(E) \quad & \Leftrightarrow \quad \lim_{R\rightarrow\infty}\frac{1}{R}\sum_{|c(e)|<R}\|f_{e}\|_{L^{2}(0,1)}^{2}=0,
	\end{align*}
	where $c(e)=\frac{1}{2}|e(0)+e(1)|$.
	For $f, g \in B^{*}(E)$, we use the notation $f \simeq g$ in the following sense:
	$$
	f \simeq g \Leftrightarrow f - g \in B_{0}^{*}(E).
	$$
	
	Now, we give the definition of the S-matrix $S(\lambda)$. The following lemma is a special case of Theorem 5.8 from \cite{And4}.
	
	\begin{lemma}[\cite{And4}]\label{S_lamb}
		Let $\lambda \in (0,\infty) \setminus T$. Then, for any incoming data $\phi^{in}\in L^{2}(M_{\lambda})$, there exist a unique solution $u\in B^{*}(E)$ of the equation
		$$
		( H_{E} -\lambda)u = 0,
		$$
		satisfying the Kirchoff conditions,
		and an outgoing data $\phi^{out}\in L^{2}(M_{\lambda})$
		satisfying
		\begin{align}
			u\simeq &-T_{V}^{(0)}(\lambda)^{*}U_{V}^{*}\frac{1}{\lambda(x)+\cos \sqrt{\lambda}+\mathrm{i}0\sigma(\lambda)}\phi^{in}\notag\\
			&+T_{V}^{(0)}(\lambda)^{*}U_{V}^{*}\frac{1}{\lambda(x)+\cos \sqrt{\lambda}-\mathrm{i}0\sigma(\lambda)}\phi^{out},
		\end{align}
		where $x=(x_{1},x_{2})$, $\lambda(x)=-\frac{1}{2}(\cos x_{1}+\cos x_{2})$ is the eigenvalue of $H_{0}(x)$ and $\sigma(\lambda) = 1$ if $\sin \sqrt{\lambda} > 0$, $\sigma(\lambda) = -1$ if $\sin \sqrt{\lambda} < 0$.
		
		The mapping
		$S(\lambda) \colon \phi^{in}\rightarrow \phi^{out}$
		is called the S-matrix.
	\end{lemma}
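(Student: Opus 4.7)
My plan is to derive the lemma as a specialization of Theorem~5.8 of \cite{And4} for the square lattice, but first I sketch the structure of the argument so the ingredients invoked are transparent. The key starting point is the reduction established in Section~\ref{sec:vertex}: by Lemma~\ref{lem:EV}, any Kirchhoff solution $u$ of $(H_E-\lambda)u=0$ is in bijection with its vertex trace $U(v):=u(v)$, which solves the discrete vertex Schr\"odinger equation $(-\Delta_{V,\lambda}+Q_{V,\lambda})U=0$; conversely, $u$ is recovered from $U$ through the lifting operator $T_V(\lambda)^{*}$. Since $q$ is compactly supported, outside a finite set of vertices the reduced operators coincide with the free ones $\Delta_{V,\lambda}^{(0)}$ and $-\cos\sqrt\lambda\,I$, so the asymptotic behavior of the edge problem is driven by the free discrete Laplacian, whose Fourier symbol $H_0(x)+\cos\sqrt\lambda$ vanishes exactly on $M_\lambda$.

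Next, I would apply the limiting absorption principle (LAP) for $-\Delta_V^{(0)}$ on $\ell^{2}(\mathbb{Z}^{2})$ at the spectral parameter $-\cos\sqrt\lambda$. For $\lambda\notin T^{(0)}$ the surface $M_\lambda$ is smooth, so the resolvent has $\pm \mathrm{i}0$ boundary values as bounded maps between Agmon--H\"ormander spaces, and the distributions $(\lambda(x)+\cos\sqrt\lambda\mp \mathrm{i}0\sigma(\lambda))^{-1}$ in the statement are precisely the Fourier representations of these boundary-value resolvents. The sign factor $\sigma(\lambda)$ aligns the intrinsic $\pm\mathrm{i}0$ convention with the direction imposed on the edges through $\sin\sqrt\lambda$ via the reduction formulas \eqref{RVL}--\eqref{Q_V}. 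Pulling the free wave constructions back through $T_V^{(0)}(\lambda)^{*}U_V^{*}$ produces the two leading terms of the asymptotic expansion $\simeq$ in the lemma; their difference encodes the free scattering pair $(\phi^{in},\phi^{out})$.

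For existence, given $\phi^{in}\in L^{2}(M_\lambda)$ I would construct the free incoming wave $u_{0}:=-T_V^{(0)}(\lambda)^{*}U_V^{*}(\lambda(x)+\cos\sqrt\lambda+\mathrm{i}0\sigma(\lambda))^{-1}\phi^{in}$, observe that $(H_E-\lambda)u_{0}$ is supported on the finitely many edges where $q\neq 0$, and correct it by $-R_E(\lambda+\mathrm{i}0\sigma(\lambda))(H_E-\lambda)u_{0}$, where $R_E$ denotes the boundary values of the resolvent of $H_E$. Boundedness of $R_E$ into $B^{*}(E)$ for $\lambda\in(0,\infty)\setminus T$ is a consequence of the LAP for the perturbed operator, which in turn reduces to the discrete LAP above combined with the compactness of the perturbation (only finitely many edges carry nonzero $q$, and only finitely many vertex coefficients $q_{v,\lambda}$ differ from the free values). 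The outgoing density $\phi^{out}$ is then read off from the outgoing component of the correction.

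The main obstacle is uniqueness, which requires a Rellich-type theorem on the quantum graph: any $u\in B^{*}(E)$ satisfying $(H_E-\lambda)u=0$ with Kirchhoff conditions and vanishing incoming and outgoing radiation must be identically zero. Passing to the vertex trace $U$, the absence of radiation forces $U\in B_{0}^{*}(V)$; microlocally, the Fourier data of $U$ concentrates on $M_\lambda$ without the allowed conormal singularity, which by the discrete Rellich theorem of \cite{And2} compels $U$ to vanish outside a compact set. A unique continuation argument along the lattice, exploiting the Kirchhoff conditions and the explicit edgewise representation \eqref{u_SC}, then propagates the zero to all of $V$ and, via $T_V(\lambda)^{*}$, back to every edge. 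This extension from the vertex level to the edge level in the Besov framework $B^{*}(E)/B_{0}^{*}(E)$ is the content of Theorem~5.8 of \cite{And4}, which applies verbatim in the square-lattice setting under (Q-1)--(Q-3) and yields the present lemma.
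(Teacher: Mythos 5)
The paper offers no proof of this lemma at all---it is imported verbatim as a special case of Theorem~5.8 of \cite{And4}---and your argument likewise rests ultimately on that theorem, so you are taking essentially the same route. Your added sketch of the internal machinery (reduction to the vertex operator via Lemma~\ref{lem:EV}, the limiting absorption principle for the free discrete Laplacian with symbol $H_{0}(x)+\cos\sqrt{\lambda}$, resolvent correction for existence, and a discrete Rellich-type theorem for uniqueness) is consistent with how \cite{And2,And4} actually prove such results, though its correctness cannot be checked against anything in the present paper.
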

	
	The results of Section 6 in \cite{And4} imply the following lemma for the square lattice.
	
    \begin{lemma}[\cite{And4}] \label{lem:S-DN}
    	For the edge Schr\"{o}dinger operator on the square lattice, the S-matrix $S(\lambda)$ and the edge D-N map $\Lambda_E$ uniquely determine each other.
    \end{lemma}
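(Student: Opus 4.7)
The plan is to follow the scheme of Section~6 in \cite{And4}, adapted to the square lattice, and to exploit the reduction to the discrete vertex Laplacian developed in Section~\ref{sec:vertex}. By Lemma~\ref{E-V_DN}, for every admissible $\lambda$ the edge D-N map $\Lambda_E(\lambda)$ and the vertex D-N map $\Lambda_V(\lambda)$ determine each other through the explicit formula \eqref{relation-DN}; likewise, through the intertwining operators $T_V(\lambda)$ and $T_V^{(0)}(\lambda)$ introduced before Lemma~\ref{S_lamb}, the S-matrix for $H_E$ can be recast in terms of vertex data. Therefore it is enough to establish the one-to-one correspondence between $S(\lambda)$ and $\Lambda_V(\lambda)$.

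First, I would observe that outside the domain $\Omega$ the potentials vanish, so any scattering solution $u$ of $(H_E-\lambda)u=0$, restricted to the exterior, induces a vertex function satisfying the free discrete equation $(-\Delta_V^{(0)}+\cos\sqrt{\lambda})\,u=0$ at every vertex not in $\mathring{\Omega}$. The scattering representation in Lemma~\ref{S_lamb} expresses this exterior solution explicitly through $\phi^{in}$, $\phi^{out}$ and the free resolvent with the $\pm\mathrm{i}0$ boundary values. This yields, for each $\lambda\in(0,\infty)\setminus T$, a linear map from $(\phi^{in},\phi^{out})\in L^2(M_\lambda)\times L^2(M_\lambda)$ to the discrete Cauchy data $\bigl(u|_{\partial\Omega},\,\partial_v^{\Omega}u|_{\partial\Omega}\bigr)$; the limiting absorption principle for the free vertex Laplacian guarantees that the outgoing and incoming components are independently well defined.

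Next, I would couple this exterior representation with the interior D-N map. The matching conditions across $\partial\Omega$ are just continuity of $u$ and the Kirchhoff law at the boundary vertices; together with the interior relation $\partial_v^\Omega u=\Lambda_V(\lambda)\,u|_{\partial\Omega}$ they yield a finite-dimensional linear system whose unknowns are the boundary trace $u|_{\partial\Omega}$ and the outgoing amplitude $\phi^{out}$, for a prescribed $\phi^{in}$. Solving this system realizes $S(\lambda)$ explicitly in terms of $\Lambda_V(\lambda)$, which settles the direction D-N $\Rightarrow$ S. For the reverse direction, I would show that as $\phi^{in}$ varies over $L^2(M_\lambda)$ the associated boundary traces on $\partial\Omega$ span the whole finite-dimensional space $\mathbb{C}^M$, so that $\Lambda_V(\lambda)$ can be read off from the pairs $(\phi^{in},S(\lambda)\phi^{in})$ via the free exterior formulas.

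The main obstacle will be proving that the coupled exterior-interior linear system above is nondegenerate for all $\lambda\in(0,\infty)\setminus T$, that is, that no nontrivial outgoing solution can satisfy a homogeneous D-N condition on $\partial\Omega$. This is a Rellich-type uniqueness in the discrete setting: if both $\phi^{in}=0$ and the interior trace vanish, one must conclude $u\equiv 0$, which uses absolute continuity of the spectrum on $(0,\infty)\setminus T$ together with the radiation condition encoded in Lemma~\ref{S_lamb}. A secondary technical point is the careful tracking of Kirchhoff conditions at vertices of $\partial\Omega$, where an edge crosses from the exterior (where $q_e=0$) into the interior; the symmetry assumption (Q-3) together with the explicit form of $\psi_{wv}(1,\lambda)=\sin\sqrt{\lambda}/\sqrt{\lambda}$ on boundary edges keeps these computations transparent and reduces the entire matching to the formulas already used in the proof of Lemma~\ref{E-V_DN}.
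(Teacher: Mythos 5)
The paper does not actually prove this lemma: it is imported verbatim from Section~6 of \cite{And4} (together with Theorem~5.8 there, quoted here as Lemma~\ref{S_lamb}), and the authors only remark that those results ``imply the following lemma for the square lattice.'' So there is no in-paper proof to measure your argument against; what you have written is a reconstruction of the strategy of \cite{And4}, and at the level of architecture it is the right one --- reduce to vertex data via the operators $T_V(\lambda)$, $T_V^{(0)}(\lambda)$ and Lemma~\ref{E-V_DN}, represent the exterior solution through the free resolvent with $\pm\mathrm{i}0$ boundary values on the characteristic surface $M_\lambda$, and couple this to the interior through the relation $\partial_v^{\Omega}u=\Lambda_V(\lambda)\,u|_{\partial\Omega}$. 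Your identification of the exterior equation $(-\Delta_V+\cos\sqrt{\lambda})u=0$ is also correct, as one sees by substituting $\psi_{wv}(1,\lambda)=\sin\sqrt{\lambda}/\sqrt{\lambda}$ into \eqref{RVL} and \eqref{Q_V}.

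The gap is that the two steps you defer are not technicalities but the entire mathematical content of the lemma. The Rellich-type uniqueness statement (no nontrivial purely outgoing solution of the free equation in the exterior with vanishing Cauchy data on $\partial\Omega$) is a substantial theorem in the discrete setting: in \cite{And2, And4} it requires the limiting absorption principle for $-\Delta_V$, a careful analysis of the Fourier restriction to $M_\lambda$ near its degenerate points (hence the excluded set $T^{(0)}$ where $\cos\sqrt{\lambda}=0,\pm 1$), and a unique continuation argument for the difference equation on $\mathbb{Z}^2$. Likewise, the claim that the boundary traces $u|_{\partial\Omega}$ span all of $\mathbb{C}^M$ as $\phi^{in}$ ranges over $L^2(M_\lambda)$ --- which is what lets you read off $\Lambda_V(\lambda)$ from the graph of $S(\lambda)$ --- is exactly the surjectivity statement that \cite{And4} must prove and that you only assert. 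As written, your text is a correct roadmap to the proof in \cite{And4}, but it cannot stand as a proof: to make it one you would either have to carry out these two arguments for the square lattice or do what the paper does and cite them explicitly.
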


    Therefore, Theorem~\ref{thm:edgeDN}, Lemma~\ref{lem:S-DN}, and the reconstruction procedure in Section~\ref{sec:construct} imply the following corollary.
	
\begin{corollary}\label{M1}
	Assume (Q-1), (Q-2) and (Q-3). Then, given any open interval $I \subset (0,\infty) \setminus T$, and the S-matrix $S(\lambda)$ for all $\lambda \in I$, one can uniquely reconstruct the potential $q_{e}(z)$ for all $e \in E$.
\end{corollary}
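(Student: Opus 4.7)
The plan is to reduce the corollary to Theorem~\ref{thm:edgeDN} using Lemma~\ref{lem:S-DN} combined with an analytic continuation argument. The hypothesis gives us only the values $S(\lambda)$ for $\lambda \in I$, whereas Theorem~\ref{thm:edgeDN} and the reconstruction in Section~\ref{sec:construct} require the full edge D-N map $\Lambda_E(\lambda)$ as a (meromorphic) function of $\lambda$. So the bridge between the two is to promote the pointwise equivalence given by Lemma~\ref{lem:S-DN} on $I$ to full knowledge of $\Lambda_E(\cdot)$.

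First I would invoke Lemma~\ref{lem:S-DN}, which states that for each $\lambda \in (0,\infty)\setminus T$ the quantities $S(\lambda)$ and $\Lambda_E(\lambda)$ determine each other. Applying this pointwise on $I$, the given scattering data yields the matrix $\Lambda_E(\lambda)$ for every $\lambda \in I$. The next step, which is the key technical input, is to observe that $\Lambda_E(\lambda)$ is an $(M\times M)$ matrix-valued meromorphic function of $\lambda \in \mathbb C$: indeed, as noted after formula \eqref{E_DN}, $\Lambda_E(\lambda)$ is analytic wherever \eqref{cond1} holds, and the excluded set $\sigma(-\Delta_{E,\Omega}+q_{E,\Omega})$ is a countable discrete subset of $\mathbb R$. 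Since $I$ is an open interval in $\mathbb R$, it is a set with accumulation points in the domain of meromorphy, so by the identity principle for analytic functions, two meromorphic functions agreeing on $I$ must coincide on $\mathbb C \setminus \sigma(-\Delta_{E,\Omega}+q_{E,\Omega})$. Consequently, the values $\{\Lambda_E(\lambda) : \lambda \in I\}$ uniquely determine $\Lambda_E(\lambda)$ for all admissible $\lambda$.

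Once the full edge D-N map is recovered, Theorem~\ref{thm:edgeDN} yields uniqueness of $q$, and the constructive algorithm in Section~\ref{sec:construct} (combined with Step 1's passage from $\Lambda_E$ to $\Lambda_V$ via \eqref{relation-DN}) gives the explicit reconstruction of $q_e$ on every edge $e \in E_\Omega$; on the remaining edges $q_e \equiv 0$ by assumption (Q-2). This establishes the corollary.

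The main (and essentially only) obstacle is the analytic continuation step. One must be careful that the equivalence in Lemma~\ref{lem:S-DN} is indeed pointwise in $\lambda$ (which is clear from the construction of $S(\lambda)$ in Lemma~\ref{S_lamb} and the formulas in Section~\ref{sec:scattering}), and that $I$ avoids the exceptional set $T$ so that $S(\lambda)$ is well defined on $I$ and the identification with $\Lambda_E(\lambda)$ is valid. After that, the meromorphic character of $\Lambda_E(\lambda)$, together with the fact that the open interval $I$ contains accumulation points in $\mathbb C\setminus \sigma(-\Delta_{E,\Omega}+q_{E,\Omega})$, makes the extension routine. No new analytical estimates are required; the corollary is a clean consequence of Theorem~\ref{thm:edgeDN}, Lemma~\ref{lem:S-DN}, and analyticity.
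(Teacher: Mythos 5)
Your argument is correct and takes essentially the same route as the paper: both reduce the corollary to Theorem~\ref{thm:edgeDN} via Lemma~\ref{lem:S-DN} together with analytic continuation in $\lambda$ from the open interval $I$. The only (harmless) difference is that the paper analytically continues $S(\lambda)$ itself, using that it is meromorphic in the half-plane $\{\lambda \colon \operatorname{Re}\lambda>0\}$ with possible branch points at $T$, whereas you first pass pointwise to $\Lambda_E(\lambda)$ on $I$ and then continue that meromorphic matrix function, which slightly cleanly sidesteps the branch-point issue.
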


Indeed, under the assumptions (Q-1), (Q-2), and (Q-3), $S(\lambda)$ is meromorphic in the half-plane $\{ \lambda \colon \mbox{Re} \, \lambda > 0\}$ with possible branch points at $T$. Therefore, given $S(\lambda)$ for $\lambda \in I$, one can find $S(\lambda)$ for all $\lambda \in (0, \infty) \setminus T$ by analytic continuation.
Analogously, one can obtain the following result.

\begin{corollary}\label{M2}
	Assume (Q-1) and (Q-3), and given a real $q_{0}(z) \in L^{2}(0,1)$ satisfying $q_{0}(z) = q_{0}(1 - z)$ and $q_{e}(z) = q_{0}(z)$ on $(0, 1)$ except for a finite number of edges $e \in E$. Given an open interval $I \subset \sigma_e(H_E) \setminus T$ and the S-matrix $S(\lambda)$ for all $\lambda \in I$, one can uniquely reconstruct the potential $q_{e}(z)$ for all edges $e \in E$.
\end{corollary}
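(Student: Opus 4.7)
The plan is to mirror the argument for Corollary \ref{M1}, with the background potential $q_0$ playing the role that the zero potential played there. Since $q_e - q_0 = 0$ on all but finitely many edges, the perturbation relative to the $q_0$-background is compactly supported, so essentially the same machinery applies.

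First, I would handle the analytic continuation. Under (Q-1), (Q-3) and the assumption that $q_e = q_0$ outside a finite set of edges, the same meromorphy argument referenced in the proof of Corollary \ref{M1} shows that $S(\lambda)$ is meromorphic on $\{\lambda : \mathrm{Re}\,\lambda > 0\}$ with possible branch points at $T$ (where $T$ as defined in \eqref{T} automatically contains $\sigma(H_e^{q_0})$, since any edge with $q_e = q_0$ contributes $\sigma(H_e^{q_0})$ to the union). Thus knowledge of $S(\lambda)$ on the open set $I$ extends to all of $(0,\infty)\setminus T$ by analytic continuation.

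Next, I would choose the region $\Omega$ with $\mathring\Omega = D_N$ large enough that every edge $e$ with $q_e \neq q_0$ is contained in $\mathring E_\Omega$, so that $q_e \equiv q_0$ on all boundary edges $\partial E_\Omega$. Lemma \ref{lem:S-DN} (and more generally the reduction in \cite{And4}) allows one to recover the edge D-N map $\Lambda_E$ from $S(\lambda)$; this step is unchanged because it depends only on the structure of the edge operator outside the perturbed region, and by choice of $\Omega$ that structure is explicitly known (given by $q_0$). Then one passes from $\Lambda_E$ to $\Lambda_V$ via the relation \eqref{relation-DN}, with the only modification that the coefficients $\sin\sqrt\lambda/\sqrt\lambda$ and $\cos\sqrt\lambda$ coming from zero-potential boundary edges are replaced by $S_e(1,\lambda)$ and $C_e(1,\lambda)$ associated with $q_0$, which are known functions.

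Now the reconstruction procedure of Section \ref{sec:construct} goes through line by line. The only substantive adaptation occurs in Step 3: instead of asserting $\psi_{wv}(1,\lambda) = \sin\sqrt\lambda/\sqrt\lambda$ on boundary edges, we use the values $\psi_{wv}(1,\lambda)$ and $\psi'_{wv}(1,\lambda)$ generated by the known $q_0$, which provides the initial data needed to start the inductive extraction via \eqref{Eq3} and the vertex equation. Lemmas \ref{Th143} and \ref{Th147} then recover the unknown $q_e$ on each interior edge exactly as before, and the final $\pi$-rotation (Step 6) covers the remaining triangular region. Combined with Theorem \ref{thm:edgeDN}, this yields unique reconstruction of $q_e$ on every edge.

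The main obstacle I anticipate is verifying that the S-matrix-to-D-N-map equivalence (Lemma \ref{lem:S-DN}) and the meromorphy of $S(\lambda)$ still hold under the more general hypothesis (Q-2) replaced by a finite-rank perturbation of $q_0$. Once one checks that the constructions of \cite{And4} depend only on the \emph{difference} being compactly supported rather than on the background being identically zero, the rest of the argument is a direct transcription of the zero-background case.
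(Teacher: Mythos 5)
Your proposal matches the paper's own (very brief) argument: the paper simply notes that the reconstruction procedure requires no essential changes except replacing $\frac{\sin\sqrt{\lambda}z}{\sqrt{\lambda}}$ and $\frac{\sin\sqrt{\lambda}(1-z)}{\sqrt{\lambda}}$ by the corresponding solutions of $(-\frac{d^2}{dz^2}+q_0(z)-\lambda)\varphi=0$, together with the same analytic-continuation argument used for Corollary~\ref{M1}. Your write-up is in fact more detailed than the paper's, and the caveat you raise about re-verifying Lemma~\ref{lem:S-DN} and the meromorphy of $S(\lambda)$ for a nonzero background is a legitimate point that the paper itself leaves implicit.
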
 

The reconstruction procedure for Corollary~\ref{M2} requires no essential changes. Instead of $\frac{\sin \sqrt{\lambda}z}{\sqrt{\lambda}}$ and
$\frac{\sin \sqrt{\lambda}(1-z)}{\sqrt{\lambda}}$, we have only to use the corresponding solutions to the Schr\"{o}dinger
equation $(-\frac{d^{2}}{dz^{2}}+q_{0}(z)-\lambda)\varphi=0$.

Corollaries~\ref{M1} and~\ref{M2} are analogous to the results of \cite{And4, And5} for the hexagonal lattice.

\end{document}